\definecolor{darkblue}{rgb}{0.2, 0, 0.8}
\numberwithin{equation}{section}
\newcommand{\dlog}{\text{dlog}\,}
\renewcommand{\th}{^\text{th}}
\renewcommand{\a}{\alpha}
\renewcommand{\b}{\beta}
\renewcommand{\u}{\mathfrak{u}}
\newcommand{\w}{\mathbf{w}}
\newcommand{\tw}{\tilde{\mathbf{w}}}
\newcommand{\s}{\sigma}
\newcommand{\pa}{\partial}
\newcommand{\ts}{\tilde{\s}}
\renewcommand{\r}{\rho}
\newcommand{\tr}{\tilde{\r}}
\newcommand{\N}{\mathcal{N}}
\newcommand{\A}{\mathcal{A}}
\newcommand{\Z}{\mathcal{Z}}
\newcommand{\C}{\mathbb{C}}
\newcommand{\MHV}{\text{MHV}}
\newcommand{\Gr}{\text{Gr}}
\renewcommand{\mod}{\text{mod }}
\renewcommand{\th}{{}^\text{th}}
\newcommand{\beq}{\begin{equation}}
\newcommand{\eeq}{\end{equation}}
\newcommand{\spl}[1]{\begin{split} #1 \end{split}}
\newcommand{\al}[1]{\begin{align} #1 \end{align} }
\newcommand{\als}[1]{\begin{align} \begin{split} #1 \end{split} \end{align}}
\newcounter{algcounter}\setcounter{algcounter}{0}
\newenvironment{alg}[1][Algorithm \thealgcounter]
{\refstepcounter{algcounter}%
\bgroup\begin{framed}%
\noindent\textbf{#1}\vspace{2mm}\\}
{\end{framed} \egroup}
\newtheorem{lemma}{Lemma}
\def\namedlabel#1#2{\begingroup
   \def\@currentlabel{#2}%
   \label{#1}\endgroup
}
\begin{document}  


\begin{titlepage}

\begin{flushright}
{\tt MCTP-14-40}
\end{flushright}

\vspace*{1cm}

\begin{center}

{\Large \bf Orientations of BCFW Charts on the Grassmannian}

\vspace*{1cm}

{Timothy M.~Olson}

\vspace{5mm}{\it
Michigan Center for Theoretical Physics \\
Randall Laboratory,
Department of Physics,\\
University of Michigan, 
450 Church St, 
Ann Arbor, MI 48109, USA}\\[4mm]

\bigskip
\texttt{timolson@umich.edu} \\
\end{center}
\bigskip

\begin{abstract}  
The Grassmannian formulation of $\mathcal{N}=4$ super Yang-Mills theory expresses tree-level scattering amplitudes as linear combinations of residues from certain contour integrals. BCFW bridge decompositions using adjacent transpositions simplify the evaluation of individual residues, but orientation information is lost in the process. We present a straightforward algorithm to compute relative orientations between the resulting coordinate charts, and we show how to generalize the technique for charts corresponding to sequences of any not-necessarily-adjacent transpositions. As applications of these results, we demonstrate the existence of a signed boundary operator that manifestly squares to zero and prove via our algorithm that any residues appearing in the tree amplitude sum are decorated with appropriate signs so all non-local poles cancel exactly, not just mod 2 as in previous works.
\end{abstract}

\end{titlepage}

\setcounter{tocdepth}{2}
{\small
\setlength\parskip{-0.5mm}
\tableofcontents
}
\newpage

\section{Introduction}
\namedlabel{sec:intro}{Introduction}
Scattering amplitudes in 4d planar ${\N=4}$ super Yang-Mills (SYM) theory can be formulated as contour integrals in the space of ${k\times n}$ matrices, modulo multiplication by a ${GL(k)}$ matrix; this is the Grassmannian manifold ${\Gr(k,n)}$ \cite{ArkaniHamed:2012nw}. Reformulating scattering amplitudes in this new framework has led to many interesting and unexpected mathematical structures in ${\N=4}$ SYM such as on-shell diagrams \cite{ArkaniHamed:2012nw,Franco:2013nwa,Bai:2014cna} and the amplituhedron \cite{Arkani-Hamed:2013jha,Arkani-Hamed:2013kca,Franco:2014csa}. Many of the results extend also to 3d ${\N=6}$ ABJM theory \cite{Aharony:2008ug,Hosomichi:2008jb} where several novel properties have emerged \cite{Huang:2013owa,Huang:2014xza,Elvang:2014fja}.

This paper will focus on planar ${\N=4}$ SYM theory, where the ${n}$-particle N${^k}$MHV tree amplitude can be obtained by evaluating the following integral:
\begin{align}
\A_n^{(k)} = \A_n^\MHV \oint_\Gamma \frac{d^{k \times n}C}{GL(k)\,M_1 M_2\ldots M_n}\delta^{4k|4k}\big(C\cdot \Z\big).
\label{integral}
\end{align}
The prefactor ${\A_n^\MHV}$ is the ${n}$-particle MHV amplitude, ${C}$ is a ${k\times n}$ matrix of full rank, and ${\Z}$ encodes the external data (momentum, particle type, etc.) as momentum twistors. In the denominator of the measure, ${M_i}$ is the ${i\th}$ consecutive ${k}$-minor of ${C}$, which means that it is the determinant of the submatrix of ${C}$ with ordered columns ${i,i+1,\ldots, i+k-1 \; (\mod n)}$. The contour on which the integral should be evaluated is designated by ${\Gamma}$. The result is a sum over residues computed at poles of the integrand. There are generally many families of contours which produce equivalent representations of the amplitude due to residue theorems.

Each contour can be thought of as a product of circles wrapping around poles of the integrand, the points where certain minors vanish. The minors are degree-${k}$ polynomials in the variables of integration, so the pole structure is generally very difficult to describe in the formulation \eqref{integral}. However, a technique was presented in \cite{ArkaniHamed:2012nw} for generating charts on the space such that any individual codimension-1 residue occurs at a simple logarithmic singularity. Using one of those charts, the measure on a ${d}$-dimensional submanifold can be decomposed into a product of ${{d\a}/{\a}=\dlog\a}$ quantities. We call this measure ${\omega}$ for future reference:
\beq
\omega=\frac{d^{k \times n}C}{GL(k)\,M_1 M_2\ldots M_n} \to \dlog\a_d \wedge \dlog\a_{d-1}  \wedge \ldots \wedge \dlog\a_1.
\label{firstdlog}
\eeq
Advantages to this formulation are that such charts are easy to generate and that every codimension-1 residue can be reached as a dlog singularity using only a small atlas of charts. A potential disadvantage is that directly relating two distinct charts is non-trivial; this could lead to sign ambiguities when combining individual residues into the amplitude. This is especially important because the residues contain non-physical singularities that should cancel in the tree amplitude sum. Previously, such divergences were shown to appear in pairs, so they at least cancel mod 2 \cite{ArkaniHamed:2012nw}. In Section \ref{sec:Apps} of this paper, we demonstrate that the cancellation is exact,\footnote{The relative signs between NMHV residues were derived in \cite{Elvang:2014fja}, but that method does not easily generalize to higher-${k}$ amplitudes.} which follows from the main result of this paper: the \ref{alg:master} Algorithm introduced in Section \ref{sec:alg}.

\vspace{4mm}\noindent\textbf{\large Summary of Results\vspace{2mm}\\}
The key development of this paper is a systematic algorithm that generates the relative sign between any two BCFW charts on a submanifold, or \emph{cell}, of the Grassmannian. Each chart is defined by a sequence of transpositions ${(a_1b_1)(a_2b_2)\ldots (a_db_d)}$ acting on a permutation labeling a 0-dimensional cell. Equivalently, each chart can be represented by a path through the poset (partially ordered set) of Grassmannian cells. 
Every transposition ${(a_i b_i)}$ corresponds to a factor of ${\dlog \a_i}$ in \eqref{firstdlog}.

To get a sense of the result, it is illustrative to consider the simple example of two charts defined by the sequences ${(ab)(cd)}$ and ${(cd)(ab)}$ with ${a<b<c<d<a\!+\!n}$. In the poset of cells, these sequences define distinct paths from a 0-dimensional cell labeled by ${\s_0}$ to a 2-dimensional cell labeled by ${\s}$, shown in \eqref{egabcd} with edges labeled by the corresponding transpositions:
\als{\includegraphics[height=2.5cm]{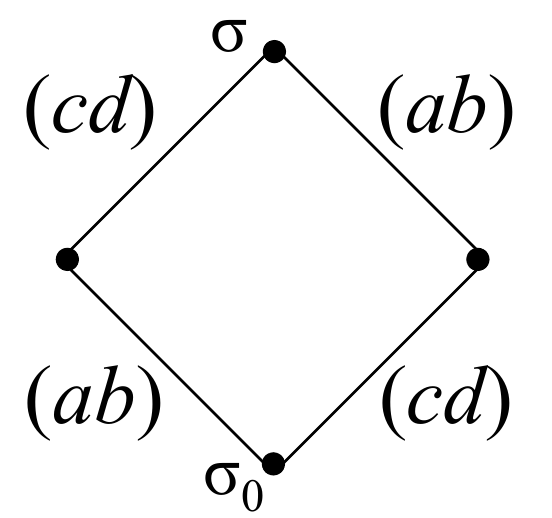}\label{egabcd}}
If we associate the coordinate ${\a_1}$ with ${(ab)}$ and ${\a_2}$ with ${(cd)}$, then the dlog forms generated by the sequences are, respectively,
\als{\omega = \dlog \a_1 \wedge \dlog \a_2, \quad \text{and} \quad \omega'=\dlog \a_2 \wedge \dlog \a_1.}
Clearly the two forms differ by an overall sign, so the two coordinate charts are oppositely oriented. We can encode this property in the poset by weighting the edges with ${\pm 1}$ such that the product of the edge weights around the loop in \eqref{egabcd} is ${-1}$. One choice of suitable signs is
\als{\includegraphics[height=2.5cm]{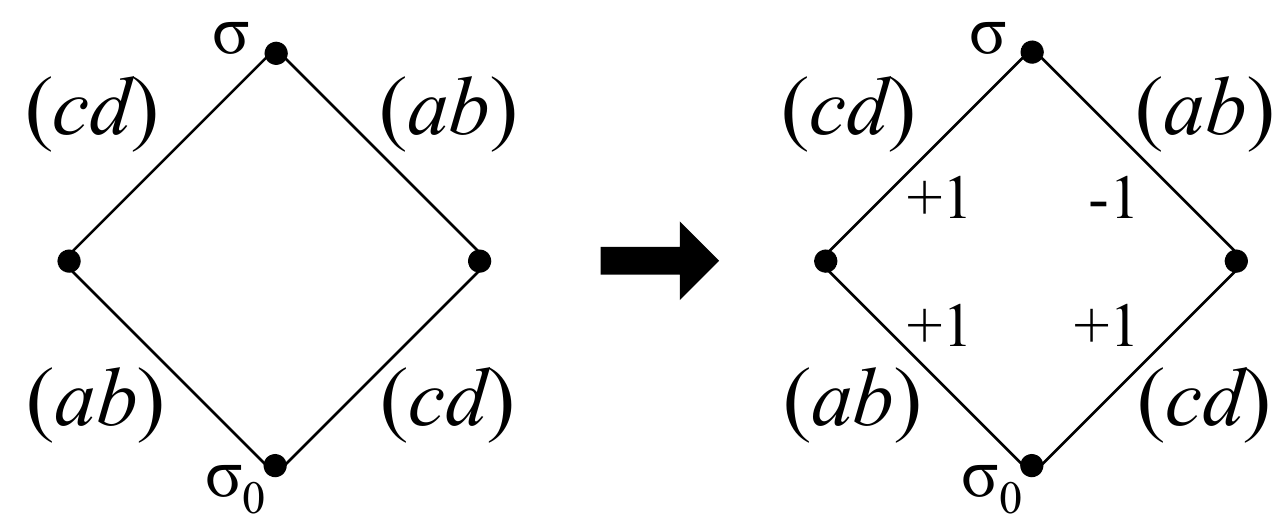}\label{egabcdweighted}}
Then the relative orientation is given by the product of edge signs along the two paths.

We show in Section \ref{sec:master} that the relative orientation between \emph{any} two charts is equal to the product of edge signs around a closed loop in the poset. All the edges can be weighted such that the product of signs around every quadrilateral is ${-1}$, just as in the example \eqref{egabcdweighted}.\footnote{A technique developed by Thomas Lam and David Speyer for assigning the edge weights is presented in Appendix \ref{app:edgesigns} \cite{TLDSprivate}.} The closed loop is obtained by concatenating each input path (blue and green solid lines in (\ref{intropaths}\,A)) together with a sawtooth path connecting their 0d cells (thick red line in (\ref{intropaths}\,A)) times ${-1}$ for each 1d cell along the connecting path. We call this method the \ref{alg:master} Algorithm.

To prove that the \ref{alg:master} Algorithm correctly yields the relative orientation, we introduce two preliminary algorithms in Section \ref{sec:BCFWstand}. When the two paths meet at a common 0d cell, illustrated in (\ref{intropaths}\,B) with blue and green solid lines, Algorithm \ref{alg:alg1} splits the big loop into smaller loops, e.g. using the dashed black lines in (\ref{intropaths}\,B), for which the relative orientations can be computed directly from the corresponding plabic networks. If the paths end in different 0d cells, e.g. (\ref{intropaths}\,C), Algorithm \ref{alg:alg2} additionally computes the relative sign between the source cells.\\
\vspace*{-5mm}
\als{
\begin{array}{ccccc}
\includegraphics[height=5.4cm]{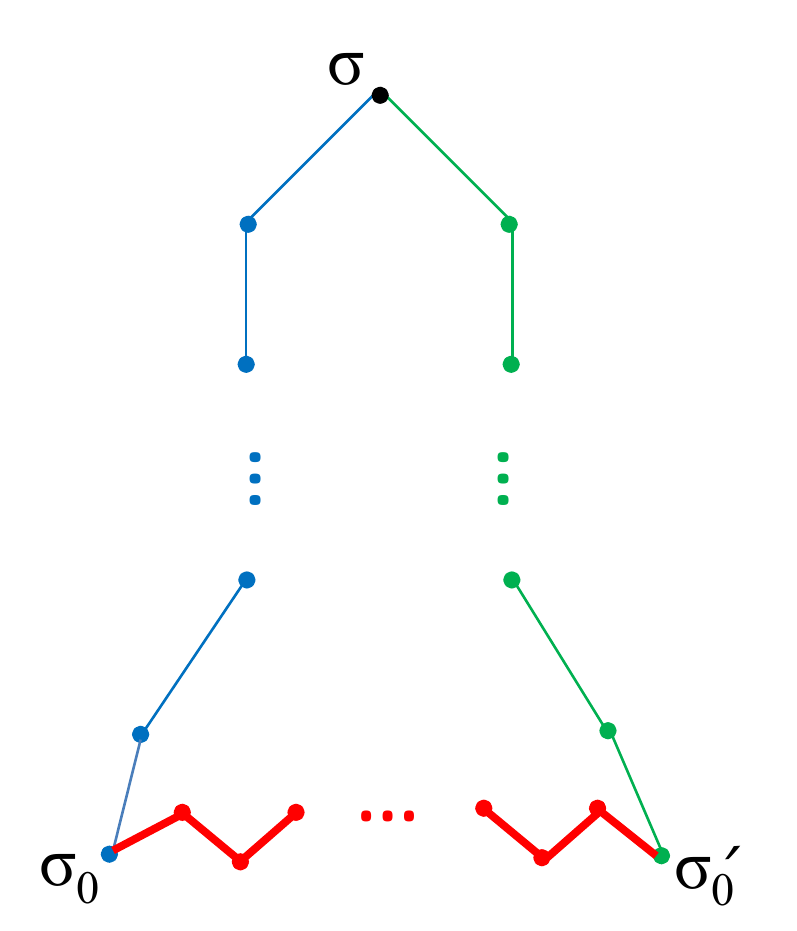} &&
\includegraphics[height=5.4cm]{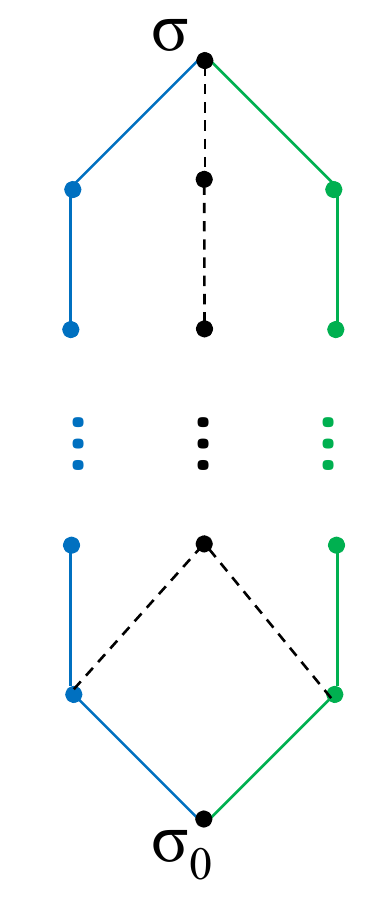} &&
\includegraphics[height=5.4cm]{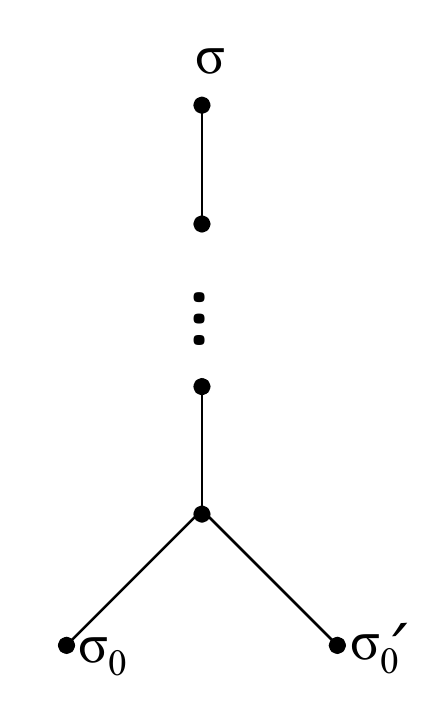} \\
\text{(A) Concatenated paths} && \text{(B) Shared 0d cell} && \text{(C) Different 0d cells}
\end{array}
\label{intropaths}}
We then use Algorithms \ref{alg:alg1} and \ref{alg:alg2} in Section \ref{sec:BCFWgen} to demonstrate that the big loop (\ref{intropaths}\,A) can always be split up into quadrilaterals, each of which contributes a factor of ${-1}$ to the overall sign. Finally in Section \ref{sec:master} it is shown that if the edges are appropriately decorated with ${\pm1}$, then only those that make up the loop  contribute to the end result (in addition to the ${-1}$ from each 1d cell along the sawtooth path).

This eliminates the sign ambiguity between distinct residues, and in Section \ref{sec:Apps} we show that residues contributing to the tree amplitude are always decorated with signs so that all non-physical singularities cancel exactly in the sum of residues. A similar argument demonstrates the existence of a boundary operator that manifestly squares to zero. In addition, we present an interpretation for paths corresponding to decompositions with non-adjacent transpositions. We review the necessary background in Section \ref{sec:background} before presenting the algorithm in Section \ref{sec:alg}, and a few details are relegated to appendices.

During the development of this analysis, a different method for determining relative orientations was proposed independently by Jacob Bourjaily and Alexander Postnikov using determinants of certain large matrices. We computed the orientations of 500 distinct charts on the 10d cell ${\{5, 3, 8, 9, 6, 7, 12, 10, 14, 11\}\in \Gr(3,10)}$ with both methods and found perfect agreement \cite{JBprivate}. The algorithm presented in this paper has also been verified by checking a variety of charts whose orientations are known by other methods. The cancellation of all doubly-appearing poles in the tree amplitude has been confirmed explicitly for all ${n=5,\ldots,13}$ and ${k=3,\ldots,\lfloor n/2 \rfloor}$. All of the algorithms described here have been implemented in Mathematica as an extension of the \emph{positroids} package included with the arXiv version of \cite{ArkaniHamed:2012nw}. Copies of the new code can be provided upon request.

\section{Background}
\label{sec:background}
\subsection{Positroid Stratification}
The positroid stratification is a decomposition of the Grassmannian ${\Gr(k,n)}$ into submanifolds called \emph{positroid cells} (or just \emph{cells}). Cells can be classified according to the ranks of submatrices constructed out of cyclically consecutive columns of a representative matrix.
The \emph{top cell} of ${\Gr(k,n)}$ is the unique cell of highest dimension, ${d=k(n-k)}$. All maximal (${k\times k}$) minors are non-vanishing in the top cell, so all chains of consecutive columns will be full rank. For example, a representative matrix of the top cell of ${\Gr(2,4)}$ is
\als{
C = \left(\begin{array}{cccc}
1 & 0 & c_{13} & c_{14} \\
0 & 1 & c_{23} & c_{24}
\end{array}\right),\label{gr24}
}
where ${c_{ij}\in \C}$ are coordinates on the manifold such that none of the ${2\times2}$ minors vanish. All four parameters must be fixed to specify a point in the top cell, which is consistent with the expected dimension ${d=2(4-2)=4}$.

Lower dimensional cells are reached by fixing relations among the entries so that additional linear dependencies arise among consecutive columns. From a given cell, the accessible codimension-1 submanifolds are called the \emph{boundaries} of that cell. The extra linear relations imply that various minors vanish in the measure of \eqref{integral}. Thus going to the boundary should be interpreted as taking a residue at the corresponding pole. Note that choosing a particular representative matrix amounts to selecting a chart on the cell, so some poles may not be accessible in certain charts. In the example above, the boundary where columns ${\vec{c}_2}$ and ${\vec{c}_3}$ are parallel is accessible by setting ${c_{13}=0}$, but there is no way to reach the boundary where columns ${\vec{c}_1}$ and ${\vec{c}_2}$ are parallel in the stated chart. The latter boundary could be accessed in a ${GL(2)}$-equivalent chart where a different set of columns were set to the identity.

A partial order can be defined on the set of positroid cells by setting ${C \prec C'}$ whenever ${C}$ is a codimension-1 boundary of ${C'}$ \cite{2011arXiv1111.3660K}. The poset structure is interesting for a number of reasons, several of which will be mentioned throughout the text. One such reason is that the poset of positroid cells in ${\Gr(k,n)}$ is isomorphic to a poset of \emph{decorated permutations}. Decorated permutations are similar to standard permutations of the numbers ${1,\ldots,n}$, but differ in that ${k}$ of the entries are shifted forward by ${n}$. To simplify the notation, we will use often `permutation' to mean `decorated permutation' since only the latter are relevant to us.
Permutations will be written single-line notation using curly brackets; an example is given in \eqref{egperm}.
When referencing specific elements of a permutation, we will use the notation ${\s(i)}$ to mean the ${i\th}$ element of the permutation ${\s}$, and with the understanding that ${\s(i\!+\!n)=\s(i)\!+\!n}$. A decorated permutation encodes the ranks of cyclically consecutive submatrices by recording, for each column ${\vec{c}_a}$, the first column ${c_b}$ with ${a\leq b \leq a\!+\!n}$ such that ${\vec{c}_a\in \text{span}(\vec{c}_{a+1},\vec{c}_{a+2},\ldots,\vec{c}_b)}$. The first inequality is saturated when ${\vec{c}_a={0}}$, and the second is saturated when ${\vec{c}_a}$ is linearly independent of all other columns. Continuing with the example \eqref{gr24}, the top cell of ${\Gr(2,4)}$ corresponds to the permutation 
\als{\s_{\text{top}}=\{3,4,5,6\},\label{egperm}}
which says that ${1\to3}$, ${2\to4}$, ${3\to5\equiv 1}$, and ${4\to6 \equiv 2}$. In terms of the linear dependencies among columns of a representative matrix, it means ${\vec{c}_{1} \in \text{span} (\vec{c}_2,\vec{c}_3)}$, ${\vec{c}_2 \in \text{span}(\vec{c}_3,\vec{c}_4)}$, ${\vec{c}_3 \in \text{span}(\vec{c}_4,\vec{c}_5)\equiv \text{span}(\vec{c}_4,\vec{c}_1)}$, and ${\vec{c}_4 \in \text{span}(\vec{c}_5,\vec{c}_6)\equiv \text{span}(\vec{c}_1,\vec{c}_2)}$.

Going to the boundary of a cell involves changing the linear relations among consecutive columns, so in permutation language, the boundary is accessed by exchanging two entries in ${\s}$. The transposition operation that swaps the elements at positions ${a}$ and ${b}$ is denoted by ${(ab)}$, with ${a<b<a\!+\!n}$; we call this the \emph{boundary operation}. 
Note that ${\s}$ only has positions ${1,\ldots,n}$, but ${b}$ can be greater than ${n}$, To account for this, we use
\als{\s'(a)=\s(b) = \s(b\!-\!n)+n \quad \text{and} \quad \s'(b)=\s(a) \Rightarrow \s'(b\!-\!n)=\s(a)\!-\!n.}
In the example \eqref{gr24}, taking the boundary where ${c_{13}=0}$ lands in a cell with linear dependencies specified by
\als{\s'=\{4,3,5,6\}.}
This corresponds to exchanging the first and second elements of ${\s_{\text{top}}}$, i.e.~the transposition ${(1\,2)}$.

Not all exchanges are allowed; for instance, applying the same transposition twice would revert back to the initial cell, which is clearly not a boundary. The allowed boundary transpositions satisfy the following criteria:
\als{
a<b\leq \s(a) < \s(b) \leq a\!+\!n \quad \text{and} \quad \s(q)\not\in(\s(a),\s(b))  ~~ \forall q \in (a,b),
\label{boundary}
}
where ${(a,b)}$ means the set ${\{a+1,b+2,\ldots b-1\}}$. We will call ${(ab)}$ an \emph{adjacent} transposition if all ${q\in(a,b)}$ satisfy ${\s(q)\equiv q ~\mod n}$ \cite{ArkaniHamed:2012nw}. The reason for this distinction will become clear in the next section. A \emph{strictly adjacent} transposition is of the form ${(a\;a\!+\!1)}$. For notational purposes, we define ${(ab)}$ to act on the right, so if ${\s}$ is a boundary of ${\ts}$, then we write ${\s=\ts\cdot(ab)}$. Of course, ${(ab)}$ is its own inverse, so acting on the right with ${(ab)}$ again yields the \emph{inverse boundary operation} ${\s\cdot(ab)=\ts}$. The boundary operation reduces the dimension by one, while the inverse boundary operation increases the dimension by one. Although the notation is identical, it should be clear what we mean from the context. Taking additional (inverse) boundaries leads to expressions like ${\rho=\s\cdot(a_1 b_1)(a_2b_2)\ldots}$, which means first exchange ${\s(a_1)}$ and ${\s(b_1)}$, then swap the elements at positions ${a_2}$ and ${b_2}$, etc.\footnote{To avoid any confusion with accessing elements of the permutation, e.g. ${\s(i)}$, we use the ${\cdot}$ after ${\s}$ to indicate that ${\s}$ is a permutation, and ${(ab)}$ is a transposition. The ${\cdot}$ operation is implicit between neighboring transpositions.}

\subsection{Plabic Graphs}
Permutations and positroid cells can be represented diagrammatically with \emph{plabic} (planar-bicolored) \emph{graphs} and \emph{plabic networks}.\footnote{Physical interpretations of certain plabic networks, also known as `on-shell diagrams,' have been explored in several recent papers, including \cite{ArkaniHamed:2012nw,Bai:2014cna}.} \emph{Plabic graphs} are planar graphs embedded in a disk, in which each vertex is colored either black or white. Any bicolored graph can be made bipartite by adding oppositely-colored bivalent vertices on edges between two identically-colored vertices, so we will assume all graphs have been made bipartite. Some edges are attached to the boundary; we will call these \emph{external legs} and number them in clockwise order ${1,2,\ldots,n}$. If a monovalent leaf is attached to the boundary, it will be called a \emph{lollipop} together with its edge. 
\emph{Plabic networks} are plabic graphs together with weights ${(t_1,t_2,\ldots,t_e)}$ assigned to the edges. The weights are related to coordinates on the corresponding cell, as will be discussed in the next subsection.

Given a plabic graph ${G}$, one can define a \emph{trip permutation} by starting from an external leg ${a}$ and traversing the graph, turning (maximally) left at each white vertex and (maximally) right at each black vertex until the path returns to the boundary at some vertex ${b}$. When ${a\neq b}$, the permutation associated with this trip has 
\als{\s_G(a)= \left\{\begin{array}{ll} b & b>a\\ b\!+\!n & b<a\end{array}\right..}
We will explain the case ${a=b}$ momentarily.

For tree amplitudes, we will be concerned with \emph{reduced} plabic graphs. A plabic graph is reduced if it satisfies the following after deleting all lollipops \cite{TLnotes}:
\begin{enumerate}[nolistsep]
\item It has no leaves;
\item No trip is a cycle;
\item No trip uses a single edge twice;
\item No two trips share two edges ${e_1}$ and ${e_2}$ in the same order.
\end{enumerate}
It follows that any external leg ${a}$ for which ${\s_G(a)\equiv a ~\mod n}$ must be a lollipop \cite{TLnotes}. Specifically, we define ${\s_G(a)=a}$ to be a black lollipop, and ${\s_G(a)=a\!+\!n}$ to be a white lollipop. 

Thus each reduced plabic graph/network corresponds to a unique decorated permutation. However, the correspondence is not a bijection; rather each permutation labels a family of reduced plabic graphs/networks. Members of each family are related by \emph{equivalence moves} that modify the edge weights but leave the permutation unchanged \cite{ArkaniHamed:2012nw,Postnikov:2006kva}:
\newpage
\begin{enumerate}[label=\bfseries(E\arabic*)]
\item \textbf{${\bm{GL(1)}}$ rotation:} At any vertex, one can perform a ${GL(1)}$ rotation that uniformly scales the weights on every attached edge, e.g. for a scaling factor ${f}$,
\als{\includegraphics[height=3cm]{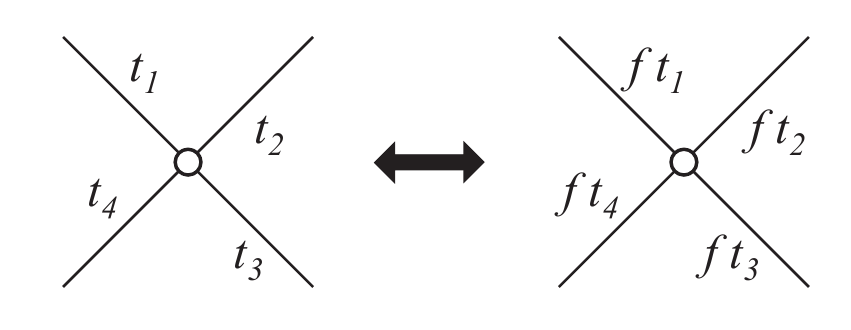}\label{e1}}
\item \textbf{Merge/delete:} Any bivalent vertex whose edges both have weight ${1}$ can be eliminated by merging its neighbors into one combined vertex and deleting the bivalent vertex and its edges, e.g.
\vspace*{-5mm}\als{\includegraphics[height=1.5cm]{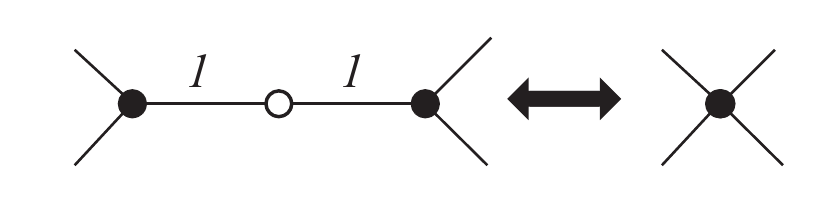}\label{e2}}
If one of its neighbors is the boundary, then the bivalent vertex should be merged with the boundary instead. The inverse operation can also be used to `unmerge' a vertex or boundary.
\item \textbf{Square move:} A four-vertex square with one pattern of coloring is equivalent to the four-vertex square with opposite coloring, e.g. 
\al{\includegraphics[height=3cm]{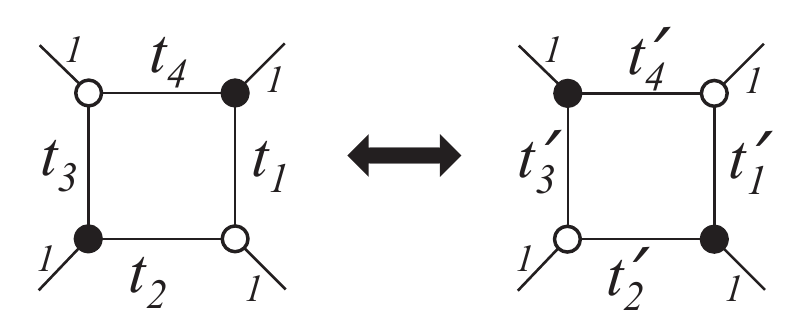}\label{e3}}
The edge weights are related (using the sign conventions of \cite{Franco:2013nwa}):
\als{
\begin{array}{lclclcl}
t_1' = \dfrac{t_3}{t_1 t_3 + t_2 t_4}, && t_2'= \dfrac{t_4}{t_1 t_3 + t_2 t_4},  &&
t_3'= \dfrac{t_1}{t_1 t_3 + t_2 t_4},  &&
t_4= \dfrac{t_2}{t_1 t_3 + t_2 t_4} 
\end{array}
.}
\end{enumerate}

From the plabic network, one can also read off a representative ${k\times n}$ matrix for the corresponding positroid cell, which is called the \emph{boundary measurement matrix} (or \emph{boundary measurement map}). Several related methods exist for defining boundary measurements such as using perfect orientations \cite{ArkaniHamed:2012nw,Franco:2013nwa,Postnikov:2006kva} or perfect matchings \cite{Franco:2013nwa,TLnotes,RK}. The equivalence moves above can be derived by requiring that the boundary measurements are unchanged by each transformation. In the next subsection, we will show how to construct the boundary measurements systematically for the types of plabic networks relevant to amplitudes. We refer the reader to the above references for more details, though we caution that the transformation rules are presented slightly differently.

These three moves are sufficient to transform any plabic network into any other in its equivalence class. In addition, using these moves one can always fix all but ${d}$ of the edge weights to unity in any network that represents a ${d}$-dimensional cell; throughout the rest of this paper, edges with unspecified weights will have weight 1. As we will see shortly, certain planar networks lead to especially convenient paramaterizations of positroid cells. The equivalence moves will allow us to compare the resulting oriented forms.

\subsection{Bridge Decompositions and Charts}
To write down a coordinate chart on a given cell, it is sufficient to construct a representative matrix with the appropriate linear dependencies among its columns. However, in a general chart the boundary structure could be very difficult to identify. Fortunately, there is a straightforward method to construct charts with simple dlog forms as in \eqref{firstdlog} \cite{ArkaniHamed:2012nw}; we review this technique below and how it relates to paths in the poset of cells. Some paths through the poset do not correspond to any such charts, so we suggest an interpretation for those paths in Section \ref{sec:genpaths} and explain the consequences for residues in Section \ref{sec:comparing}.

\subsubsection{Standard BCFW Bridge Decompositions}
\label{sec:standCharts}
Positroid cells of dimension zero in ${\Gr(k,n)}$ correspond to unique plabic graphs made solely out of lollipops with edge weight 1. The ${k}$ legs with ${\s(a)=a\!+\!n}$ have white vertices while the rest are black. The boundary measurement matrix is zero everywhere except the submatrix composed out of the ${k}$ columns corresponding to the white vertices, which together form a ${k\times k}$ identity matrix. There are no degrees of freedom, so the differential form \eqref{firstdlog} is trivial, ${\omega=1}$. 

Since there is a unique representative plabic network for each 0d cell, we will build higher-dimensional representatives out of the set of 0d networks. In the poset, higher-dimensional cells can be reached from 0d cells by repeatedly applying the inverse boundary operation defined below \eqref{boundary}. Equivalently, a ${d}$-dimensional cell can be decomposed into a sequence of adjacent transpositions acting on a 0d permutation, e.g.\footnote{The transposition ${(24)}$ is adjacent because it acts on a permutation whose third element is self-identified.}
\als{\{3,5,4,6\}=\{5,6,3,4\}\cdot(24)(23)(12)\label{eg3546}.}
This is called a \emph{BCFW decomposition} and leads to a convenient graphical representation. In a planar network, an adjacent transposition ${(a_i b_i)}$ amounts to simply adding a white vertex on leg ${a_i}$, a black vertex on leg ${b_i}$, and an edge between them with weight ${\a_i}$:
\als{\includegraphics[height=3cm]{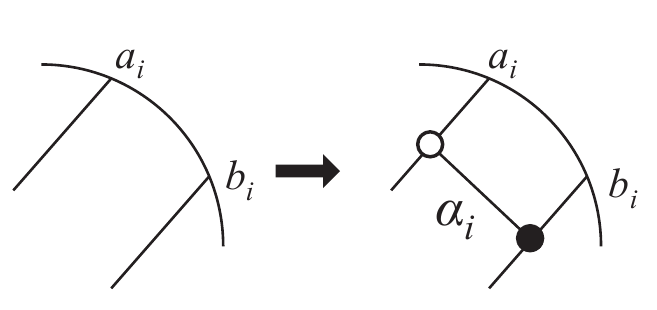}}
This is called a \emph{(BCFW) bridge}. Note that if one of the legs is initially a lollipop, then the resulting leaf should be deleted after adding the bridge \cite{TLnotes,Postnikov:2006kva}.
The example \eqref{eg3546} generates the following sequence of graphs:
\als{\includegraphics[height=3cm]{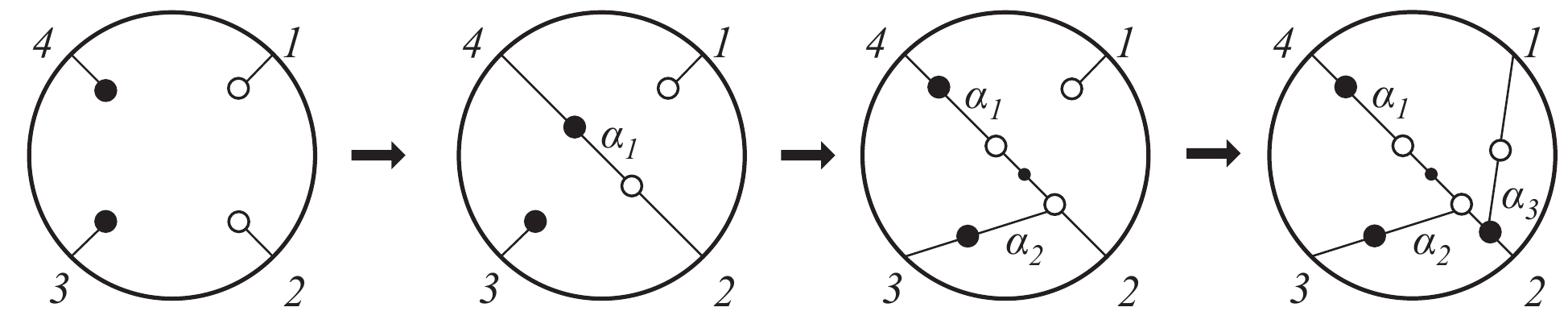}\label{pgrep}}
We have added a black vertex between the first two bridges to make the graph bipartite; it is drawn slightly smaller to distinguish it from the bridge vertices. Many simplifications are possible using the equivalence moves \eqref{e1}-\eqref{e3}.

Adding a BCFW bridge affects the trip permutation by exchanging ${\s_G(a)\leftrightarrow\s_G(b)}$ as desired, and the boundary measurement matrix transforms in a simple way \cite{ArkaniHamed:2012nw},
\al{c_b \to c_b+\alpha_i c_a.}
One can easily check that the linear dependencies of the shifted matrix agree with the expected permutation as long as ${(a_i b_i)}$ is an adjacent transposition. If ${\omega_{i-1}}$ is the differential form associated with the initial cell, then after adding the bridge, the new form is 
\al{\omega_i = \dlog\a_i \wedge \omega_{i-1}.}
This prescription provides a robust way to generate coordinates on any cell in ${\Gr(k,n)}$.

There are generally many ways to decompose a ${d}$-dimensional permutation ${\s}$ into a sequence of adjacent transpositions acting on a 0d cell. In fact, no single chart covers all boundaries of a generic cell. However, an atlas of at most ${n}$ standard BCFW charts is sufficient to cover all boundaries \cite{ArkaniHamed:2012nw}. Every such chart defines a unique path through the poset from ${\s}$ to some ${\s_0}$. In the example \eqref{eg3546}, the path was
\al{ \s=\{3,5,4,6\} \xrightarrow{(12)}  \{5,3,4,6\} \xrightarrow{(23)}\{5,4,3,6\} \xrightarrow{(24)} \{5,6,3,4\}=\s_0. }
Not all BCFW decompositions of ${\s}$ end in the same 0d cell, however. Another BCFW decomposition of ${\{3,5,4,6\}}$ ends in ${\s'_0=\{5,2,7,4\}}$:
\al{\s=\{3,5,4,6\} \xrightarrow{(12)}  \{5,3,4,6\} \xrightarrow{(46)} \{5,2,4,7\} \xrightarrow{(34)} \{5,2,7,4\}=\s'_0. }
The important point is that every BCFW decomposition corresponds to a unique path of length ${d}$ that starts at ${\s}$ and ends in a 0d cell. The converse is not true; some paths of length ${d}$ that start at ${\s}$ and end in a 0d cell do not correspond to any BCFW decomposition.

\subsubsection{Generalized Decompositions}
\label{sec:genpaths}
When evaluating residues, one will often encounter paths through the poset that do not coincide with any single standard chart. These paths contain edges that represent non-adjacent transpositions. Continuing with the earlier example, the following path ends in the same 0d cell as \eqref{eg3546}, but the first transposition ${(13)}$ crosses a non-self-identified leg:
\al{ \s=\{3,5,4,6\} \xrightarrow{(13)}  \{4,5,3,6\} \xrightarrow{(24)}\{4,6,3,5\} \xrightarrow{(14)} \{5,6,3,4\}=\s_0. }
Nevertheless, this path is certainly a possible route when evaluating residues since every codimension-1 boundary is accessible from some adjacent chart \cite{ArkaniHamed:2012nw}. We could, for instance, take the decomposition from \eqref{eg3546}, and take ${\a_2\to 0}$. It is easy to see from the boundary measurement matrix that this yields the desired linear dependencies among columns:
\al{
\left(\begin{array}{cccc}
1 & \a_3 & 0 & 0 \\
0 & 1 & \a_2 & \a_1
\end{array}\right)
\xrightarrow{\a_2\to0}
\left(\begin{array}{cccc}
1 & \a_3 & 0 & 0 \\
0 & 1 & 0 & \a_1
\end{array}\right).
}

The generalization to any path through the poset is clear; each successive step involves computing the residue at a logarithmic singularity in some chart. Thus every path corresponds to some dlog form, not just the paths for which we have explicit plabic graphical representations. We will call these \emph{generalized decompositions} and their coordinates \emph{generalized charts}. In practice, computing residues along a particular path through the poset can always be done using standard adjacent charts, though it may involve changing coordinates at several steps along the way. As we will see in Section \ref{sec:comparing}, the sign of the resulting residue depends only on the path taken, and not on the choice of reference charts along the way.

\section{Relating Distinct Charts}
\label{sec:alg}
Recall that a decomposition for a ${d}$-dimensional cell ${C}$ corresponds to a sequence of transpositions applied to a permutation ${\s_0}$ labeling a 0-dimensional cell ${C_0}$. Each sequence defines a particular path through the poset with endpoints at ${C}$ and ${C_0}$. In this section, we will show that the relative sign between the dlog forms generated by two distinct charts can be obtained systematically. We will first show this result for standard BCFW charts constructed using only adjacent transpositions. Subsequently, we will discuss the generalized situation where we do not always have a simple graphical representation; the convention for deriving signs will be extended to cover the additional possibilities while maintaining consistency with the standard setup. The extended conventions will also lead to a simpler method for comparing charts.

\subsection{Standard BCFW Charts}
\label{sec:BCFWstand}
There are two cases that we need to address depending on whether the decompositions end in identical 0d cells or distinct ones. We will cover the identical case first and then deal with the other situation.

\subsubsection{Charts with Identical 0d Cells}
We assume first that the 0d cell labeled by ${\sigma_0}$ is the same for both paths. The ${d}$-dimensional cell labeled by ${\s}$ is connected to ${\s_0}$ by two sequences of transpositions
\als{\s=\s_0\cdot(a_1 b_1)(a_2 b_2)\ldots (a_d b_d)=\s_0\cdot(a_1' b_1')(a_2' b_2')\ldots(a_d' b_d').}
Graphically, the concatenation of the two paths creates a closed loop of length ${2d}$ in the poset, shown here schematically with one path denoted by a blue solid line and the other by a green dashed line:
\als{\includegraphics[height=7cm]{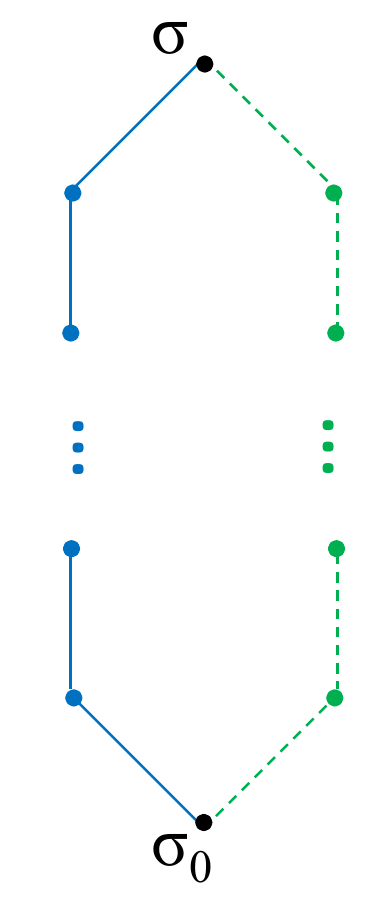}}

The relative orientation of the two dlog forms corresponding to the sequences can be obtained by a simple algorithm. Before presenting that result, we will need the following lemma:
\begin{lemma}
The top cell can be reached from any cell, ${C}$, of dimension ${d}$ by a sequence of ${k(n-k)-d}$ strictly adjacent transpositions.
\label{lemma}
\end{lemma}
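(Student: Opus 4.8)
The plan is to induct on the codimension $k(n-k)-d$, leaning on the single structural fact (stated above) that every inverse boundary operation raises the dimension by \emph{exactly} one. A strictly adjacent transposition that is a valid inverse boundary operation therefore contributes exactly one step, so it suffices to prove that such a move always exists whenever $C$ is not the top cell; iterating then reaches the top cell, and since each step raises the dimension by one, the number of steps is forced to be $k(n-k)-d$. Thus the whole argument collapses to a single existence claim about ``descents'' of the labeling permutation.

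First I would reduce the validity criterion \eqref{boundary} to a simple descent condition in the strictly adjacent case. Applying $(a\,b)$ with $b=a+1$ as an inverse boundary operation to $\s$ means $(a\,a\!+\!1)$ must satisfy \eqref{boundary} with respect to the higher cell $\tilde\s=\s\cdot(a\,a\!+\!1)$, for which $\tilde\s(a)=\s(a\!+\!1)$ and $\tilde\s(a\!+\!1)=\s(a)$. Because the interval $(a,a\!+\!1)$ is empty, the non-crossing clause of \eqref{boundary} is vacuous. The bounds $a\!+\!1\le\s(a\!+\!1)$ and $\s(a)\le a\!+\!n$ hold automatically from $i\le\s(i)\le i\!+\!n$, so the only surviving requirement is the strict inequality $\s(a\!+\!1)<\s(a)$. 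Hence $(a\,a\!+\!1)$ is a valid dimension-raising move precisely when $\s$ has a descent at position $a$, read cyclically with the convention $\s(n\!+\!1)=\s(1)\!+\!n$.

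It then remains to show that the top cell is the unique decorated permutation with no descent. If $\s(a)<\s(a\!+\!1)$ for every $a\in\{1,\dots,n\}$, then
\[\s(1)<\s(2)<\cdots<\s(n)<\s(n\!+\!1)=\s(1)+n,\]
so $\s(1),\dots,\s(n)$ are $n$ distinct increasing integers confined to the $n$ consecutive values $\s(1),\dots,\s(1)\!+\!n\!-\!1$; they must therefore fill that window, giving $\s(i)=\s(1)+(i-1)$. Imposing the global constraint $\sum_{i=1}^n(\s(i)-i)=kn$ (which encodes that the cell lives in $\Gr(k,n)$) forces $\s(1)=k\!+\!1$ and hence $\s(i)=i\!+\!k$, the top-cell permutation. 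Consequently every non-top cell admits at least one descent, supplying the required strictly adjacent move, and the induction closes.

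The main obstacle I anticipate is the bookkeeping in the reduction of \eqref{boundary} to the descent condition: one must correctly track the affine shifts $\s(i\!+\!n)=\s(i)\!+\!n$, verify that the ``automatic'' inequalities really are automatic at every allowed position (including the wrap-around transposition $(n\,n\!+\!1)$ acting on positions $n$ and $1$, where $\tilde\s(n)=\s(1)\!+\!n$), and justify the global shift sum $\sum_i(\s(i)-i)=kn$. Once the descent criterion and the uniqueness of the descent-free permutation are in hand, the counting and termination follow immediately from the dimension-raising property, so no further computation is needed.
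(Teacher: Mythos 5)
Your proof is correct and follows essentially the same route as the paper: the paper's proof also iteratively removes inversions (your ``descents,'' with the same cyclic convention $\s(n\!+\!1)=\s(1)\!+\!n$) via strictly adjacent transpositions until the ordered top-cell permutation is reached. Your write-up is in fact more complete than the paper's one-line argument, since you explicitly verify that a descent makes $(a\;a\!+\!1)$ a valid inverse boundary operation under \eqref{boundary} and prove that the top cell is the unique descent-free decorated permutation.
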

\begin{proof}
Let ${\s}$ be the permutation labeling ${C\in\Gr(k,n)}$. When ${\sigma}$ contains two neighboring elements satisfying ${\sigma(i)>\sigma(i+1)}$ (with ${\sigma(n+1)= \sigma(1)\!+\!n}$), this is called an \emph{inversion}. Such an inversion can be removed by applying the (strictly adjacent) transposition ${(i~i+1)}$. Since all entries in ${\sigma_{\text{top}}}$ are ordered, one can reach the top cell by iteratively eliminating all inversions.
\end{proof}
For example, the top cell of ${\Gr(2,6)}$ can be reached from the 5-dimensional cell ${\{2, 3, 4, 6, 7, 11\}}$ by the sequence of transpositions ${(6\,7)(1\,2)(2\,3)}$:
\als{
\{2, 3, 4, 6, 7, 11\}\xrightarrow{(6\,7)}\{5, 3, 4, 6, 7, 8\} \xrightarrow{(1\,2)}\{3, 5, 4, 6, 7, 8\} \xrightarrow{(2\,3)} \{3, 4, 5, 6, 7, 8\} .
}
Using this procedure, any BCFW sequence can be extended to reach the top cell using only strictly adjacent transpositions. Since ${(i,i+1)}$ does not cross any legs, the resulting sequence will be a valid BCFW sequence. Therefore, we may assume without loss of generality that the cell on which we seek to compare orientations is the top cell because two sequences which lead to a cell of lower dimension can be trivially extended to top cell sequences by appending the same transpositions to both paths. This will not affect the relative sign of the forms since both will have identical pieces appended to them.

We turn now to the sign-comparison algorithm. The idea is to compare each BCFW chart to specially chosen reference charts whose relative orientation is easy to compute. They are chosen so that at each iteration, the loop in the poset (initially of length ${2d}$) is shortened. Then the final relative orientation is the product over all the intermediate orientations.

\begin{alg}
\textbf{Input:} Two BCFW sequences of length ${d=k(n-k)}$: ${\w=(a_1 b_1)(a_2 b_2)\ldots (a_d b_d)}$ and
\\\hspace*{2cm} ${\w'=(a_1' b_1')(a_2' b_2')\ldots(a_d' b_d')}$\\
\textbf{Output:} ${\pm 1}$\vspace{2mm}\\
\textbf{Procedure:}\vspace{-5mm}
\begin{enumerate}[label=\bfseries\arabic*)]
\item Let ${j}$ be the smallest index such that ${(a_j b_j)\neq(a_j' b_j')}$. The transpositions with ${i>j}$ yield a closed loop of length ${\ell\leq 2d}$. If there is no such position, then the paths are identical, so return ${+1}$.
\item Let ${\sigma}$ label the ${j}$-dimensional cell reached by the sequence of transpositions ${(a_1 b_1)(a_2 b_2)\ldots (a_j b_j)}$ and ${\sigma'}$ label that reached by ${(a_1' b_1')(a_2' b_2)'\ldots (a_j' b_j')}$. Using the following rules, construct reference charts to which the initial charts should be compared. Comparing the two reference charts produces a known sign; the relevant parts are displayed with each step, and their relative signs are derived in Appendix \ref{app:alg1}. There are several cases to consider (with ${a<b<c<d<a\!+\!n}$):
\begin{enumerate}[label=\bfseries\roman*)]
\item ${\bm{(a_j b_j)=(a b), ~ (a_j' b_j') = (c d)}}$\\
The ${j}$-dimensional cells ${\sigma}$ and ${\sigma'}$ have a shared ${(j+1)}$-dimensional neighbor ${\tilde{\sigma}=\sigma\cdot(cd)=\sigma'\cdot(ab)}$. Let ${\u}$ be the sequence generated by Lemma \ref{lemma} for the cell labeled by ${\tilde{\sigma}}$. Then the reference sequences and relative sign are:\newline
\begin{minipage}{\textwidth}
\begin{minipage}[t]{0.7\textwidth}
\vspace{2mm}
\begin{enumerate}[label=$\bullet$,itemsep=-1mm,itemindent=-6mm]
\item Ref. sequence 1: ${\tw=(a_1 b_1)(a_2 b_2)\ldots (a_{j-1} b_{j-1})(ab)(cd) \u}$
\item Ref. sequence 2: ${\tw'=(a_1' b_1')(a_2' b_2')\ldots (a_{j-1}' b_{j-1}')(cd)(ab) \u}$
\item The relative sign between reference charts is ${-1}$.
\end{enumerate}
\end{minipage}
\begin{minipage}[t]{0.2\linewidth}
\vspace{-5mm}
\begin{figure}[H]\includegraphics[width=3cm]{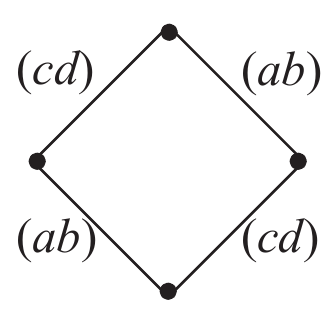}
\end{figure}
\end{minipage}
\vspace{-3mm}
\end{minipage}

\item ${\bm{(a_j b_j)=(a b), ~ (a_j' b_j') = (a c)}}$\\
In this case ${\sigma}$ and ${\sigma'}$ have a shared ${(j+1)}$-dimensional neighbor ${\tilde{\sigma}=\sigma\cdot(bc)=\sigma'\cdot(ab)}$. Let ${\u}$ be the sequence generated by Lemma \ref{lemma} for the cell labeled by ${\tilde{\sigma}}$. Then the reference sequences and relative sign are:\newline
\begin{minipage}{\textwidth}
\begin{minipage}[t]{0.7\textwidth}
\vspace{2mm}
\begin{enumerate}[label=$\bullet$,itemsep=-1mm,itemindent=-6mm]
\item Ref. sequence 1: ${\tw=(a_1 b_1)(a_2 b_2)\ldots (a_{j-1} b_{j-1})(ab)(bc) \u}$
\item Ref. sequence 2: ${\tw'=(a_1' b_1')(a_2' b_2')\ldots (a_{j-1}' b_{j-1}')(ac)(ab) \u}$
\item The relative sign between reference charts is ${-1}$.
\end{enumerate}
\end{minipage}
\begin{minipage}[t]{0.2\linewidth}
\vspace{-5mm}
\begin{figure}[H]\includegraphics[width=3cm]{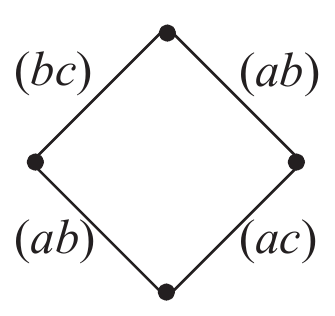}
\end{figure}
\end{minipage}
\vspace{-3mm}
\end{minipage}

\item ${\bm{(a_j b_j)=(a c), ~ (a_j' b_j') = (b c)}}$\\
Again ${\sigma}$ and ${\sigma'}$ have a shared ${(j+1)}$-dimensional neighbor ${\tilde{\sigma}=\sigma\cdot(bc)=\sigma'\cdot(ab)}$. Let ${\u}$ be the sequence generated by Lemma \ref{lemma} for the cell labeled by ${\tilde{\sigma}}$. Then the reference sequences and relative sign are:\newline
\begin{minipage}{\textwidth}
\begin{minipage}[t]{0.7\textwidth}
\vspace{2mm}
\begin{enumerate}[label=$\bullet$,itemsep=-1mm,itemindent=-6mm]
\item Ref. sequence 1: ${\tw=(a_1 b_1)(a_2 b_2)\ldots (a_{j-1} b_{j-1})(ac)(bc) \u}$
\item Ref. sequence 2: ${\tw'=(a_1' b_1')(a_2' b_2')\ldots (a_{j-1}' b_{j-1}')(bc)(ab) \u}$
\item The relative sign between reference charts is ${-1}$.
\end{enumerate}
\end{minipage}
\begin{minipage}[t]{0.2\linewidth}
\vspace{-5mm}
\begin{figure}[H]\includegraphics[width=3cm]{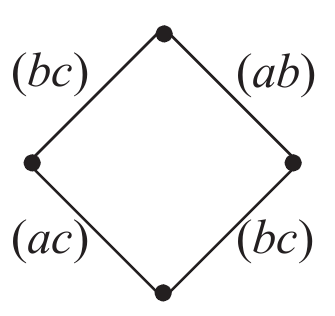}
\end{figure}
\end{minipage}
\vspace{-3mm}
\end{minipage}

\item ${\bm{(a_j b_j)=(a b), ~ (a_j' b_j') = (b c)}}$\\
Using only adjacent transpositions, ${\s}$ and ${\s'}$ do not have a common ${(j+1)}$-dimensional neighbor. However, certain neighbors of ${\s}$ and ${\s'}$ do have a common neighbor of dimension ${(j+2)}$. Specifically, let ${\ts=\s\cdot(bc)}$ and ${\ts'=\s'\cdot(ab)}$. Then ${\ts}$ and ${\ts'}$ have a common ${(j+2)}$-dimensional neighbor ${\rho=\ts\cdot(ab)=\ts'\cdot(bc)}$. Let ${\u}$ be the sequence generated by Lemma \ref{lemma} for the cell labeled by ${\rho}$. Then the reference sequences and relative sign are:\newline
\begin{minipage}{\textwidth}
\begin{minipage}[t]{0.7\textwidth}
\vspace{2mm}
\begin{enumerate}[label=$\bullet$,itemsep=-1mm,itemindent=-6mm]
\item Ref. sequence 1: ${\tw=(a_1 b_1)(a_2 b_2)\ldots (a_{j-1} b_{j-1})(ab)(bc)(ab) \u}$
\item Ref. sequence 2: ${\tw'=(a_1' b_1')(a_2' b_2')\ldots (a_{j-1}' b_{j-1}')(bc)(ab)(bc) \u}$
\item The relative sign between reference charts is ${+1}$.
\end{enumerate}
\end{minipage}
\begin{minipage}[t]{0.2\linewidth}
\vspace{-8mm}
\begin{figure}[H]\hspace*{3mm}\includegraphics[width=2.5cm]{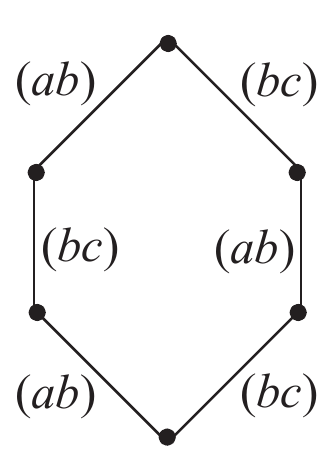}
\end{figure}
\end{minipage}
\vspace{-3mm}
\end{minipage}
\newpage

\item ${\bm{(a_j b_j)=(a c), ~ (a_j' b_j') = (b d)}}$\\
Similar to the previous case, ${\s}$ and ${\s'}$ do not have a common ${(j+1)}$-dimensional neighbor using only adjacent transpositions. Nonetheless, with ${\ts=\s\cdot(bc)}$ and ${\ts'=\s'\cdot(bc)}$, then ${\ts}$ and ${\ts'}$ have a common ${(j+2)}$-dimensional neighbor ${\rho=\ts\cdot(cd)=\ts'\cdot(ab)}$. Let ${\u}$ be the sequence generated by Lemma \ref{lemma} for the cell labeled by ${\rho}$. Then the reference sequences and relative sign are:\newline
\begin{minipage}{\textwidth}
\begin{minipage}[t]{0.7\textwidth}
\vspace{2mm}
\begin{enumerate}[label=$\bullet$,itemsep=-1mm,itemindent=-6mm]
\item Ref. sequence 1: ${\tw=(a_1 b_1)(a_2 b_2)\ldots (a_{j-1} b_{j-1})(ac)(bc)(cd) \u}$
\item Ref. sequence 2: ${\tw'=(a_1' b_1')(a_2' b_2')\ldots (a_{j-1}' b_{j-1}')(bd)(bc)(ab) \u}$
\item The relative sign between reference charts is ${-1}$.
\end{enumerate}
\end{minipage}
\begin{minipage}[t]{0.2\linewidth}
\vspace{-8mm}
\begin{figure}[H]\hspace*{3mm}\includegraphics[width=2.5cm]{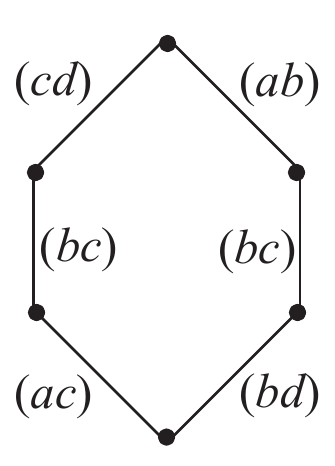}
\end{figure}
\end{minipage}
\vspace{-3mm}
\end{minipage}

\item ${\bm{(a_j b_j)=(b c), ~ (a_j' b_j') = (a d)}}$\\
In this case, we must look further to find a shared cell above ${\sigma}$ and ${\sigma'}$ using only adjacent transpositions. They have a shared ${(j+3)}$-dimensional great-grandparent ${\tilde{\rho}=\sigma\cdot(ab)(cd)(bc)=\sigma'\cdot(bc)(ab)(cd)}$. Let ${\u}$ be the sequence generated by Lemma \ref{lemma} for the cell labeled by ${\tilde{\rho}}$. Then the reference sequences and relative sign are:\newline
\begin{minipage}{\textwidth}
\begin{minipage}[t]{0.7\textwidth}
\vspace{2mm}
\begin{enumerate}[label=$\bullet$,itemsep=-1mm,itemindent=-6mm]
\item Ref. sequence 1: ${\tw=(a_1 b_1)(a_2 b_2)\ldots (a_{j-1} b_{j-1}) (bc)(ab)(cd)(bc)\u}$
\item Ref. sequence 2: ${\tw'=(a_1' b_1')(a_2' b_2')\ldots (a_{j-1}' b_{j-1}')(ad)(bc)(ab)(cd) \u}$
\item The relative sign between reference charts is ${-1}$.
\end{enumerate}
\end{minipage}
\begin{minipage}[t]{0.2\linewidth}
\vspace{-8mm}
\begin{figure}[H]\hspace*{3mm}\includegraphics[width=2.5cm]{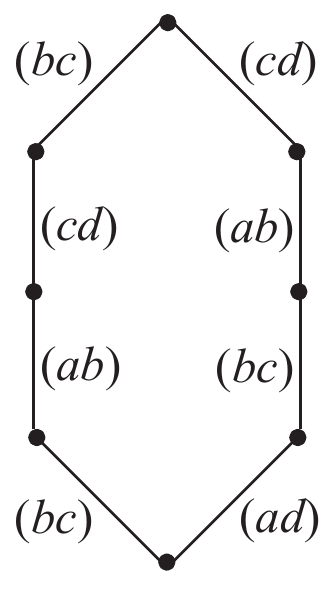}
\end{figure}
\end{minipage}
\end{minipage}

\end{enumerate}
\vspace{-2cm}\item Repeat this algorithm to compare ${\w}$ to ${\tw}$ and ${\w'}$ to ${\tw'}$.
\item Return the product of the relative sign from step (2) times\\ the result of each comparison in step (3).
\end{enumerate}
\label{alg:alg1}
\end{alg}

The relative orientation of any two BCFW charts with identical endpoints can be compared with this algorithm. 
In the second part of the proof below, we show that the signs and reference charts presented in step (2i) are correct; this should also serve as an illustrative example of the algorithm in action.

\begin{proof}
We need to show that the algorithm will terminate in a finite number of iterations, and that the sign generated at each step is correct.
\begin{itemize}
\item We will first show that the algorithm will terminate after a finite number of iterations. The reference sequences constructed in step (2) are chosen so that when the algorithm is called again in step (3) to compare ${\w}$ to ${\tw}$, the new inputs satisfy ${(a_i b_i)=(a_i' b_i')}$ for all ${i\leq j}$. The same is true for the comparison of ${\w'}$ and ${\tw'}$. Step (1) searches for the first point at which the input sequences differ, so by construction, the next position will be at least ${j+1}$, which is larger than in the previous iteration. Since $j$ is bounded by ${d}$, the algorithm will eventually terminate.
\item Next we will explain the results presented in step (2i). Since ${a<b<c<d<a\!+\!n}$, the two transpositions can be applied in either order without violating the adjacent requirement of BCFW sequences. Therefore ${\sigma}$ and ${\sigma'}$ have a common neighbor ${\tilde{\sigma}}$, and both ${(a_1 b_1)(a_2 b_2)\ldots (a_{j-1} b_{j-1})(ab)(cd)\u}$ and ${(a_1' b_1')(a_2' b_2')\ldots (a_{j-1}' b_{j-1}')(cd)(ab)\u}$ are valid BCFW sequences for the top cell. Moreover, since ${(a_i' b_i')=(a_i b_i)}$ for all ${i<j}$, their corresponding forms differ only in positions ${j}$ and ${j+1}$:
\als{
\hspace*{-3mm}\omega &= \dlog\a_d \wedge \dlog \a_{d-1}  \ldots \wedge \dlog\a_{j+2} \wedge \dlog\a_{j+1} \wedge \dlog\a_{j} \wedge \dlog\a_{j-1}\ldots \wedge \dlog\a_1, \\
\hspace*{-3mm}\omega' &= \dlog\a_d \wedge \dlog \a_{d-1}  \ldots \wedge \dlog\a_{j+2} \wedge \dlog\beta_{j+1} \wedge \dlog\beta_{j} \wedge \dlog\a_{j-1}\ldots \wedge \dlog\a_1,
}
where ${\alpha_j}$ and ${\alpha_{j+1}}$ are the weights associated with ${(ab)}$ and ${(cd)}$ in the first sequence, while ${\beta_{j}}$ and ${\beta_{j+1}}$ are associated with ${(cd)}$ and ${(ab)}$ in the second sequence. To determine the relationship between the two forms, we will use the plabic graph representations of the transpositions as BCFW bridges. Focusing on the ${j}$ and ${j+1}$ parts of the graph, we find the following:
\als{
\includegraphics[height=6cm]{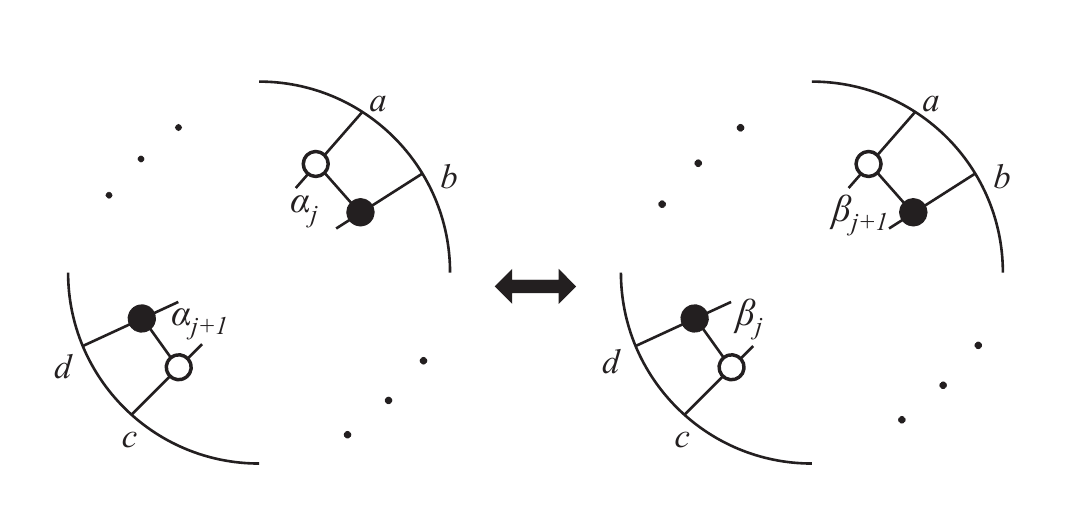}\label{abcd}
}

The left and right diagrams can only be equivalent if ${\b_j=\a_{j+1}}$ and ${\b_{j+1} = \a_{j}}$, which implies
\als{
\omega'&=\dlog\a_d \wedge  \ldots \wedge \dlog\a_{j+2} \wedge \dlog\a_{j} \wedge \dlog\a_{j+1} \wedge \dlog\a_{j-1}\ldots \wedge \dlog\a_1=-\omega.
}
Thus the relative sign between the reference charts is ${-1}$.
\end{itemize}
The remaining cases are described in Appendix \ref{app:alg1}.
\end{proof}
Thus we now have an algorithm that correctly computes the relative signs for BCFW forms generated from identical 0d cells.

\subsubsection{Charts with Distinct 0d Cells}
The next step will be to extend the result to decompositions that terminate in distinct 0d cells. Let us start with the simplest case: finding the relative sign between two charts on a 1d cell labeled by ${\sigma_1}$. Since it is 1-dimensional, all but two entries in ${\sigma}$ are self-identified ${\mod n}$. The two non-trivial positions can be labeled ${a}$ and ${b}$ such that ${a<b<a\!+\!n}$. 

The 1d cell has exactly two boundaries, which can be accessed respectively by ${(ab)}$ or ${(b\;a\!+\!n)}$, so the two BCFW sequences whose charts we will compare are ${(ab)}$ and ${(b\;a\!+\!n)}$:
\als{\includegraphics[height=2cm]{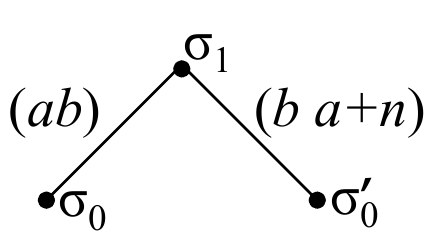}}
It is straightforward to find the relative orientations of the corresponding forms,
\als{\omega_1=\dlog\a \quad \text{and} \quad \omega_1'=\dlog\b,} using basic plabic graph manipulations. As we see in the following diagrams, several ${GL(1)}$ rotations \eqref{e1} combined with a merge and unmerge with the boundary \eqref{e2} are sufficient to discover that ${\b=1/\a}$,
\als{\includegraphics[height=2.5cm]{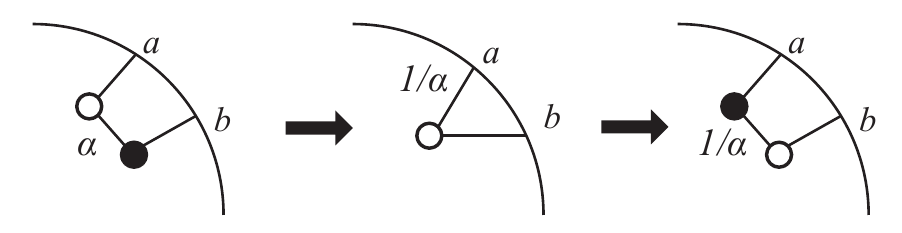}\label{1dbridge}}
Therefore, the forms are oppositely oriented, i.e.~${\omega_1'=-\omega_1}$.

More generally, we could consider charts on ${d}$-dimensional cells whose corresponding BCFW sequences are identical everywhere except for the first transposition:
\als{\includegraphics[height=5cm]{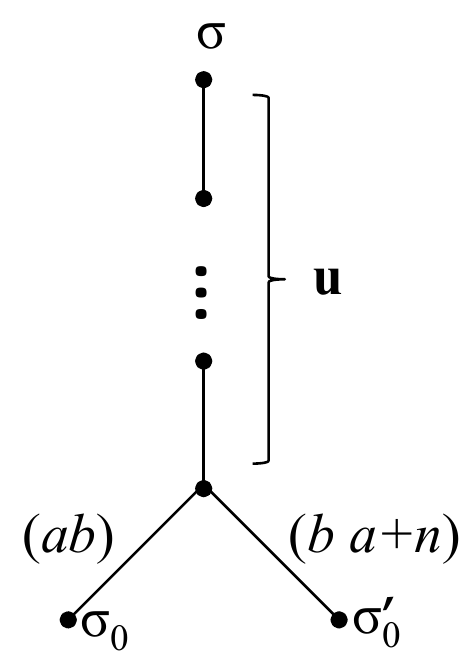}\label{split}}
The associated forms are
\als{
\omega_d &= \dlog\a_d \wedge \dlog\a_{d-1} \wedge \ldots \wedge \dlog\a_1,\\
\omega_d' &= \dlog\a_d \wedge \dlog\a_{d-1} \wedge \ldots \wedge \dlog\b_1.}
For example, we could find a Lemma \ref{lemma} sequence, ${\u}$, for the 1d cell above and compare the two charts defined by ${\s_0\cdot(ab)\u}$ and ${\s_0'\cdot(b\,a\!+\!n)\u}$. Since BCFW sequences such as ${\u}$ use only adjacent transpositions, no bridge will ever be attached to legs ${a}$ and ${b}$ further from the boundary than the first bridge ${(ab)}$, resp., ${(b\;a\!+\!n)}$. Therefore, one can apply very similar logic as in the 1d case\footnote{The only difference being that the merge/unmerge moves may be with other vertices instead of the boundary.} to find that ${\beta_1=1/\alpha_1}$, so ${\omega_d'=-\omega_d}$.

We can easily extend this to any two charts whose 0d endpoints share a common 1d neighbor, ${\sigma_1}$, by applying Algorithm \ref{alg:alg1}. Each sequence can be related to a reference sequence with the same 0d cell, but which goes through ${\sigma_1}$ and then follows some arbitrarily chosen path, say ${\u}$, to the top cell. If the same path is chosen to compare to both charts, then the relative orientation of the reference charts is ${-1}$.

Finally, this can be extended to any two charts terminating in arbitrarily separated 0d cells by iterating the previous step together with Algorithm \ref{alg:alg1}. We combine this into Algorithm \ref{alg:alg2}.
\begin{alg}
\label{alg:alg2}
\textbf{Input:} Two BCFW sequences of length ${d=k(n-k)}$: ${\w=(a_1 b_1)(a_2 b_2)\ldots (a_d b_d)}$ and
\\\hspace*{2cm} ${\w'=(a_1' b_1')(a_2' b_2')\ldots(a_d' b_d')}$; and their 0d endpoints: ${\sigma_0}$ and ${\sigma_0'}$.\\
\textbf{Output:} ${\pm 1}$\vspace{2mm}\\
\textbf{Procedure:}\vspace{-5mm}
\begin{enumerate}[label=\bfseries\arabic*)]
\item If ${\sigma_0=\sigma_0'}$, compute the relative sign of the two forms using Algorithm 1. Return the result.
\item Else, let ${a}$ be the smallest index such that ${\sigma_0(a)<\sigma_0'(a)}$, and let ${b}$ be the smallest index such that ${\sigma_0(b)>\sigma_0'(b)}$. We assume that ${a<b}$; if not, then the roles of $\s_0$ and $\s'_0$ should be exchanged. Let ${\ts=\sigma_0\cdot(ab)}$, whose boundaries are ${\s_0}$ and ${\ts_0=\ts\cdot(b\;a\!+\!n)}$, and define ${\u}$ to be the Lemma \ref{lemma} sequence from ${\ts}$ to the top cell. Construct two reference sequences: ${\tw = (ab)\u}$ and ${\tw'= (b\;a\!+\!n)\u}$.
\item Repeat this algorithm to compare ${(\w,\s_0)}$ to ${(\tw,\s_0)}$, and ${(\w',\s_0')}$ to ${(\tw',\ts_0)}$. 
\item Return the product of the results from step (3) times ${-1}$ due to the relative sign between ${\tw}$ and  ${\tw'}$.
\end{enumerate}
\end{alg}
\newpage
\begin{proof}
Assuming that the cells and edges in step (2) exist, the sign at each iteration is valid because it uses Algorithm 1 to compare charts with identical 0d cells, and it returns ${-1}$ for each pair of sequences that differ only in the first position. It remains to show that step (2) is correct. Since all entries in the permutations labeling 0d cells are self-identified ${\mod n}$, the definitions of ${a}$ and ${b}$ imply that:
\als{\s_0(a)=a,\quad \s_0(b)=b\!+\!n,\quad \s_0'(a)=a\!+\!n, \text{~~and~~} \s_0'(b)=b.}
There exists another 0d cell ${\ts_0}$, which is identical to ${\s_0}$ except 
\als{\ts_0(a)=\s_0(a)\!+\!n=a\!+\!n=\s_0'(a) \text{~~~and~~~} \ts_0(b)=\s_0(b)\!-\!n=b=\s_0'(b).}
The new cell ${\ts_0}$ has two important properties:
\begin{itemize}
\item The first is that ${\s_0}$ and ${\ts_0}$ have a common 1d neighbor, ${\ts=\s_0\cdot (ab) = \ts_0\cdot(b\;a\!+\!n)}$. Since all other entries in ${\s_0}$ and ${\ts_0}$ are self-identified ${\mod n}$, both ${(ab)}$ and ${(b\;a\!+\!n)}$ are adjacent. Thus the cells and reference sequences of step (2) are uniquely defined and satisfy the standard adjacency requirements.
\item The second is that ${\ts_0}$ differs from ${\s_0'}$ at fewer sites than ${\s_0}$ differs. If there are ${m\geq1}$ locations where ${\s_0(i)-\s_0'(i)\neq0}$, then there are only ${m-2}$ locations where ${\ts_0(i)-\s_0'(i)\neq0}$. Since all entries in 0d cells are self-identified ${\mod n}$, and ${k}$ entries are greater than ${n}$, then ${m}$ must be even and no larger than ${2k}$. Hence, Algorithm \ref{alg:alg2} will complete after at most ${k}$ iterations.
\end{itemize}
\vspace{-1.1cm}\end{proof}
\noindent Therefore Algorithms \ref{alg:alg1} and \ref{alg:alg2} are together sufficient to find the relative orientation of any two standard BCFW charts.
 
\subsection{Generalized Decompositions}
\label{sec:BCFWgen}
The plabic graph representation explained in Section \ref{sec:standCharts} is convenient for the study of standard BCFW charts. They are a subset of the generalized decompositions, so the rules for defining reference charts and relative signs in Algorithms \ref{alg:alg1} and \ref{alg:alg2} will still apply. Since cases (i)-(vi) in step (2) of Algorithm \ref{alg:alg1} cover all possible comparisons, the techniques introduced in the previous section are in fact sufficient to find the relative orientation of \textit{any} two charts.

However, there is an advantage to studying the generalized charts more closely. Cases (iv)-(vi) in Algorithm \ref{alg:alg1} were distinctly different than the other three cases because the reference charts required two or three steps to meet at a common cell as opposed to only one step in the earlier cases. Comparing the reference paths of the first three cases shows that they define quadrilaterals, while cases (iv) and (v) define hexagons, and (vi) defines an octagon. The relevant sections of the poset are depicted in \eqref{shapes} with the solid lines indicating the paths used in Algorithm \ref{alg:alg1}, and the dashed lines showing the additional edges that were not used (the finely dotted lines indicate edges that do not always exist). We include the quadrilaterals from cases (i)-(iii) for completeness:
\als{\includegraphics[height=8cm]{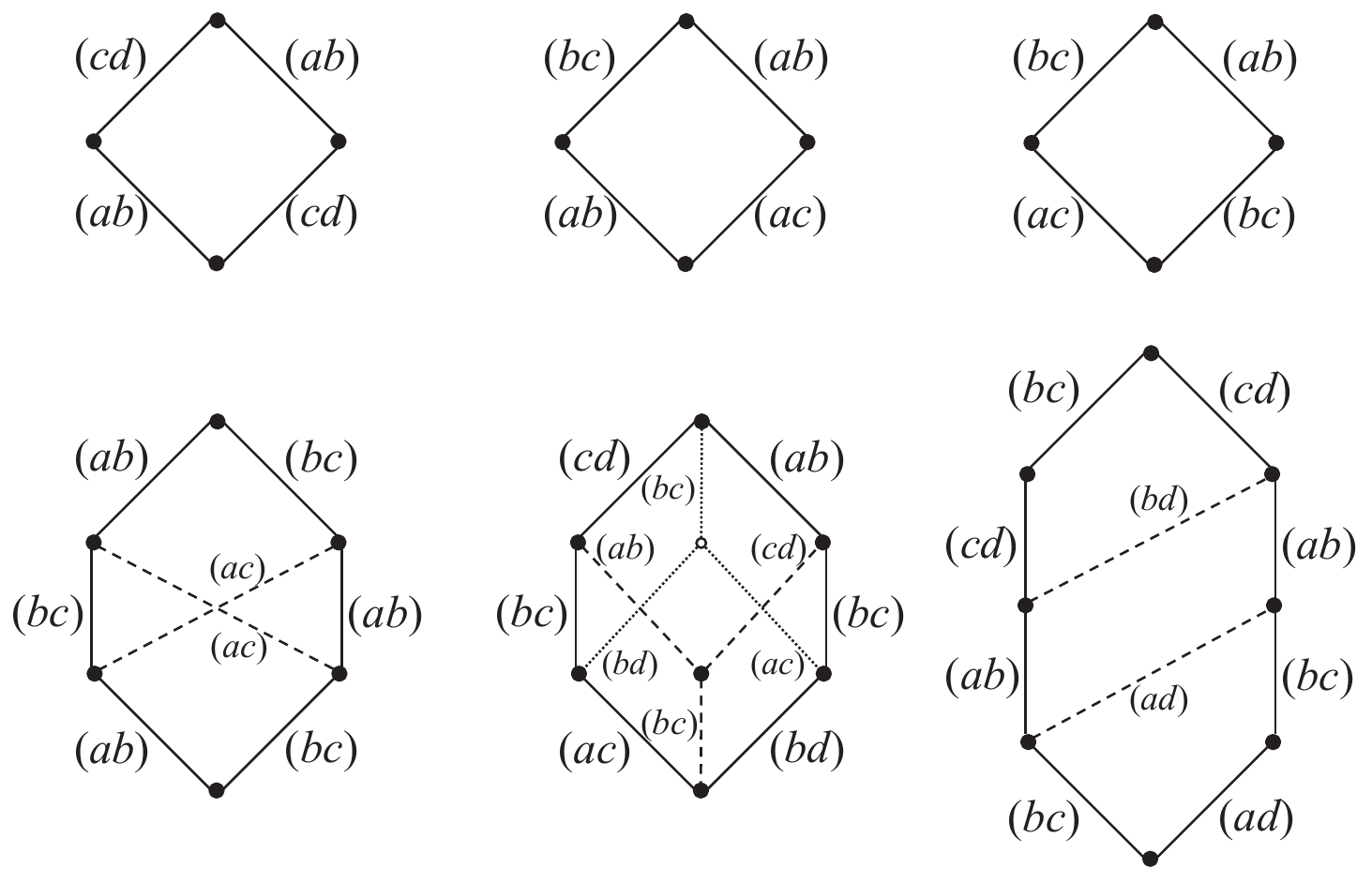}\label{shapes}}
The extra transpositions permitted in generalized charts allow the hexagons and octagons to be refined into quadrilaterals. Some of the internal quadrilaterals are equivalent to those from cases (i)-(iii), but there are also new ones. The relative orientation of two charts which differ only by one of the new quadrilaterals needs to be determined. To fix the signs, we require that the refined polygons produce the same signs as above when split into charts that differ by the interior quadrilaterals. 

The hexagon from case (iv) can be split into a pair of quadrilaterals two ways. Either way, the top quadrilateral appears in one of the first three cases, which implies that the relative orientation around the lower quadrilateral must be ${-1}$ in order to agree with the overall sign of ${+1}$ derived in Appendix \ref{app:alg1}.

In case (v), the edges shown with dashed lines always exist, so the hexagon can be split into three quadrilaterals. The one on the lower left is equivalent to case (iii), the lower right is equivalent to case (ii), and the top quadrilateral is identical to case (i). Hence the product of the three individual signs is ${(-1)^3=-1}$, in agreement with the result found in Appendix \ref{app:alg1}. In some situations, the hexagon can also be split up using the finely dotted lines. Then the top two hexagons are equivalent to cases (ii) and (iii), which implies that the relative orientation around the lower quadrilateral must also be ${-1}$.

Finally, the octagon from case (vi) can be also be split into three quadrilaterals. The top one matches case (iii), and the middle is equivalent to case (ii), so the relative orientation around the lower one must be ${-1}$ to agree with Appendix \ref{alg:alg1}. This is unsurprising considering that the two transpositions are completely disjoint, so applying them in opposite order would suggest that the forms differ by a minus signs, similar to example \eqref{egabcdweighted} in the \ref{sec:intro}.

This exhausts all possible quadrilaterals that could appear in the poset. Hence the relative orientation between any two charts that differ by a quadrilateral is ${-1}$. A significant consequence of this result is the existence of a boundary operator which manifestly squares to zero, as we discuss further in Section \ref{sec:Apps}.

\subsection{The Master Algorithm}
\label{sec:master}
Before proceeding to discuss various applications, we present a more efficient method to compute the relative orientation of any two charts. In each iteration of Algorithm \ref{alg:alg1}, every edge in the reference charts enters into two comparisons (once to the corresponding initial chart, and once to the other reference chart), while the edges in the initial charts enter only one comparison (to the associated reference chart). Therefore, if we assign ${\pm 1}$ to each \textit{edge} such that the product of signs around any quadrilateral is ${-1}$, then the signs on the reference chart edges will appear twice and hence square to 1, while the product of signs on the initial chart edges will combine to produce the same overall sign as found by Algorithm \ref{alg:alg1}. One method for producing a consistent set of edge signs is presented in Appendix \ref{app:edgesigns} \cite{TLDSprivate}.

When changing 0d cells with Algorithm \ref{alg:alg2}, one should think of taking a closed loop that traverses down and back each branch in \eqref{split}, thus encountering each edge twice. Consequently, every edge sign appears twice, thus squaring to 1, so the only sign from this step will be ${-1}$ due to the relative sign between the reference charts. However, one can easily check that applying Algorithm \ref{alg:alg1} will introduce one additional copy of each branch, so those edges should be included in the overall loop. The result of chaining several of these together is a sawtooth path between the two 0d cells (the bold red line in \eqref{sawtooth}) that contributes signs from each edge along the way and a minus sign for each 1d cell along the path. Schematically, the combined loop will look like the following:
\vspace*{-4mm}\als{\includegraphics[height=6cm]{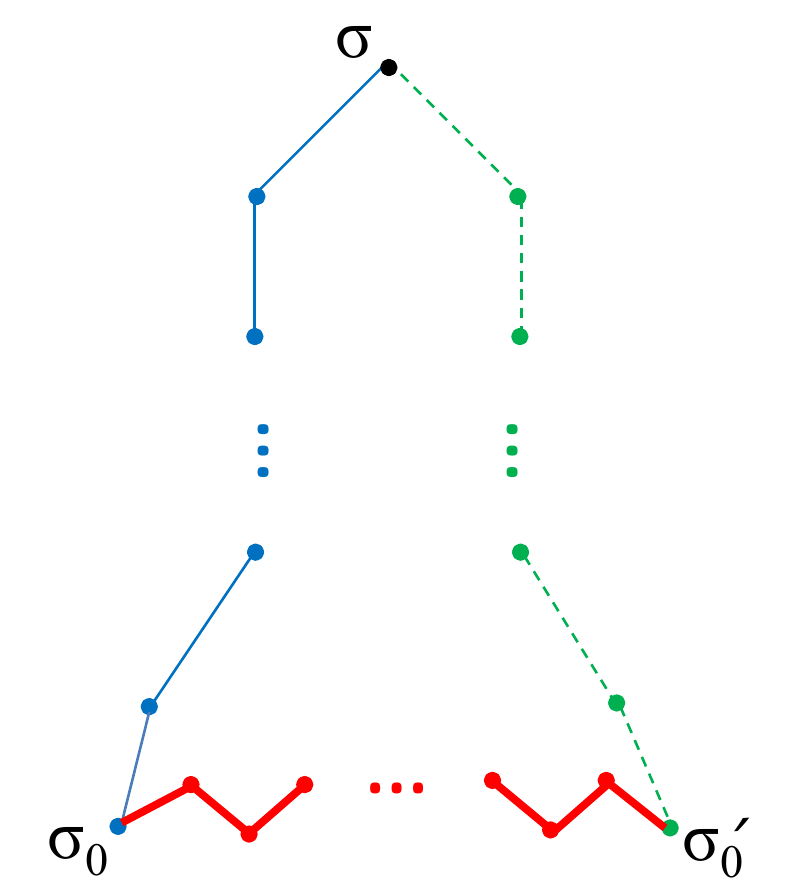}\label{sawtooth}}
\newpage
\noindent This method for computing signs is summarized in the \ref{alg:master} Algorithm:
\begin{alg}[Master Algorithm]
\namedlabel{alg:master}{Master}
\textbf{Input:} Two BCFW sequences of length ${d=k(n-k)}$: ${\w=(a_1 b_1)(a_2 b_2)\ldots (a_d b_d)}$ and
\\\hspace*{2cm} ${\w'=(a_1' b_1')(a_2' b_2')\ldots(a_d' b_d')}$\\
\textbf{Output:} ${\pm 1}$\vspace{2mm}\\
\textbf{Procedure:}\vspace{-5mm}
\begin{enumerate}[label=\bfseries\arabic*)]
\item If both BCFW paths terminate in the same 0-dimensional cell, then the relative orientation is given by the product of edge signs along the paths. Equivalently, it is the product of signs around the closed loop of length ${2d}$ obtained by traversing down one path and back up the other. 
\item If they terminate in different 0d cells, then the relative orientation also depends on the signs along a path connecting the two 0d cells. One can obtain such a path by a sawtooth pattern between 0d and 1d cells. The sign of this path is given by the product of edge signs along the path, times ${(-1)^{m/2}}$, where ${m}$ is the number of locations $i$ satisfying ${\s_0(i)-\s_0'(i)\neq0}$ (${m/2}$ is the number of 1d cells in the sawtooth path). Thus the relative orientation is given by the product of signs along each BCFW path, times the connecting path sign. Equivalently, it is the product of signs around the closed loop obtained by concatenating the three paths \eqref{sawtooth}, times the signs for the 1d cells.
\end{enumerate}
\end{alg}
So far, all the charts have been assumed to have minimal length, i.e.~${k(n-k)}$ for charts on the top cell. In other words, each transposition in the sequence increases the dimension by 1. However, the \ref{alg:master} Algorithm indicates that \emph{any} path through the poset can be compared to any other path, even if they zig-zag up and down. 

We have verified this algorithm by implementing it in Mathematica and applying it to a variety of charts whose orientations are known by other methods. This includes pairs of randomly generated NMHV charts of the type studied in \cite{Elvang:2014fja}, as well as higher ${k}$ charts whose matrix representatives have identical ${GL(k)}$ gauge fixings; the latter can be compared by directly equating the entries. In addition, the orientations of 500 distinct charts on the 10d cell ${\{5, 3, 8, 9, 6, 7, 12, 10, 14, 11\}\in \Gr(3,10)}$ were computed using an independent method due to Jacob Bourjaily and Alexander Postnikov \cite{JBprivate}. The results agreed perfectly with our algorithm.

\section{Applications}
\label{sec:Apps}
In the remainder of this paper, we will discuss several areas in which the relative orientations are important. The end results are not surprising; they were anticipated and used in several previous works. The new contribution of this section will be to put these ideas on firm combinatorial footing, so for example, spurious poles in the tree contour will cancel exactly instead of ${\mod 2}$ as in \cite{ArkaniHamed:2012nw}.

\subsection{Comparing Residue Orientations}
\label{sec:comparing}
We have shown that any two charts can be compared by taking the product of edge signs around a closed loop in the poset of cells.
This applies to any charts, even those corresponding to generalized decompositions with non-adjacent transpositions. As explained in Section \ref{sec:genpaths}, one can follow any path through the poset by taking residues out of order in standard BCFW charts and changing coordinates as needed. Due to the \ref{alg:master} Algorithm, the final sign on the residue depends only on the path, not on the intermediate choices of charts. We will demonstrate this with a convincing example. 

There is a standard chart on ${\{4,3,6,5\}}$ obtained by the path ${P_1}$:
\al{ \{4,3,6,5\} \xrightarrow{(23)} \{4,6,3,5\} \xrightarrow{(14)} \{5,6,3,4\}.}
Both transpositions are adjacent. The corresponding matrix representative and form are
\al{
C=\left(\begin{array}{cccc}
1 & 0 & 0 & \a_1 \\
0 & 1 & \a_2 & 0
\end{array}\right)
\qquad 
\omega = \dlog\a_2 \wedge \dlog \a_1.\label{eg4365}}
Taking either coordinate to vanish lands in a codimension-1 boundary, so we are allowed to take them to zero in either order. There is also a non-adjacent path ${P_2}$ that ends in the same 0d cell:
\al{ \{4,3,6,5\} \xrightarrow{(14)} \{5,3,6,4\} \xrightarrow{(23)} \{5,6,3,4\}.}
We will now evaluate the 0d residue along both paths using the coordinate chart \eqref{eg4365}, ignoring the delta functions in \eqref{integral}. The convention for evaluating residues is to take a contour around ${\a_i=0}$ only when ${\a_i}$ is the first variable in the form. Along ${P_1}$ we first take ${\a_2\to 0}$ and then ${\a_1\to 0}$, which is the order they appear in the form; hence the residue is ${+1}$. We can follow ${P_2}$ by taking ${\a_1\to 0}$ first, so we pick up a factor of ${-1}$ from reversing the order of the wedge product. Thus the residue along ${P_2}$ is ${-1}$. The relative sign is ${-1}$, exactly as our algorithm predicts because the paths differ by a quadrilateral. 

In terms of edge signs, there is a ${\mathbb{Z}_2}$ symmetry at every vertex that allows us to flip the signs on all the attached edges without changing the overall sign of any closed loop. Since the product of signs around a quadrilateral is ${-1}$, we can use the symmetry to fix the edge signs so they exactly agree with the above computation at each step:
\als{\includegraphics[height=4cm]{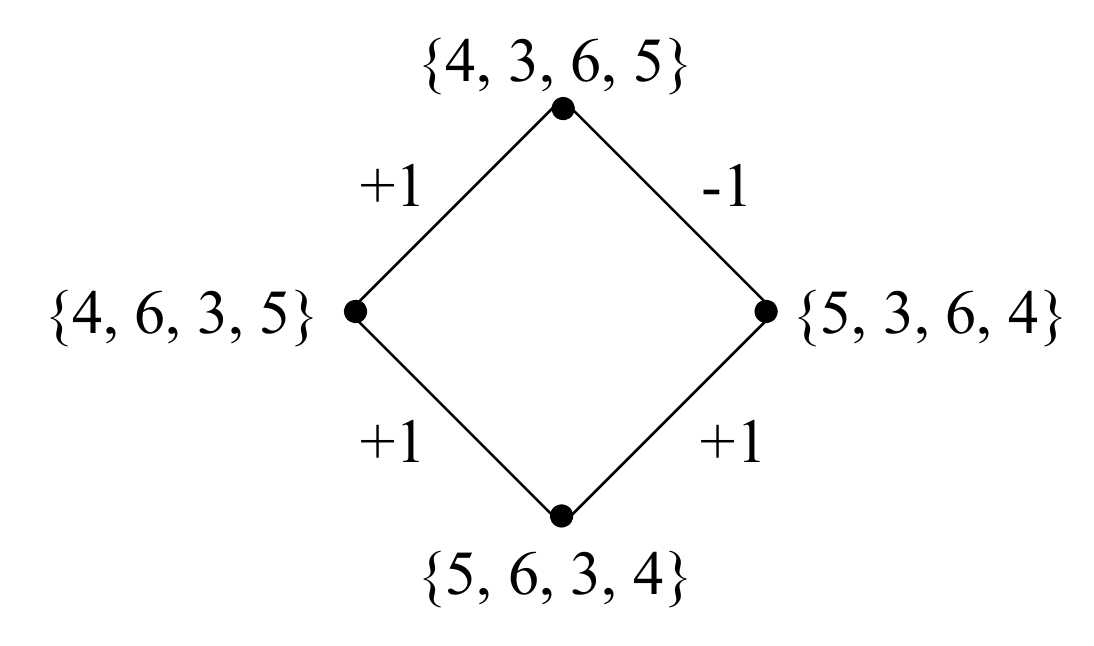}\label{matchingsigns}}
The generalization to more complicated charts is straightforward.

\subsection{Boundary Operator}
Define the signed boundary operator ${\pa}$ acting on a cell ${C}$ to be the sum of all boundaries of ${C}$ weighted by the ${\pm 1}$ weight on the edge connecting each boundary to ${C}$.
\al{\pa C = \sum_i w\big(C,C'_i\big)\, C'_i,\label{pa}}
where the sum is over all cells $C_i'$ in the boundary of ${C}$ and ${w(C,C'_i)=\pm1}$ is the weight on the edge between ${C}$ and ${C'_i}$. Equivalently, we could take the sum over all cells of the appropriate dimension and define ${w(C,C'_i)=0}$ whenever there is no edge between them.%
\footnote{This is not a unique definition since the edge weights can be flipped without affecting signs around closed loops, but any consistent set of signs such as those defined in Appendix \ref{app:edgesigns} will be sufficient for our purposes.} Applying the boundary operator again therefore yields a sum of codimension-2 boundaries of ${C}$, each one weighted by the product of the sign on the edge connecting it to its parent times the sign on its parent from the first application of ${\pa}$.
\al{\pa^2 C = \sum_i \sum_{j(i)} w\big(C,C'_i\big)\,w\big(C'_i,C''_{j(i)}\big) C''_{j(i)},}
where ${i}$ runs over boundaries ${C_i'}$ of ${C}$ and ${j(i)}$ runs over boundaries ${C_ {j(i)}''}$ of ${C'_i}$. In order for this result to vanish, every codimension-2 cell must appear twice and with opposite signs. 

We will first show that each cell appears twice,\footnote{See also Section 6.3 of \cite{ArkaniHamed:2012nw} for a similar proof that they appear twice.} and then it will be clear from our setup that that signs are opposite. Let ${\s}$ be the permutation labeling ${C}$. Each edge represents a transposition acting on ${\s}$, so codimension-2 cells will arise from pairs of transpositions ${(ab)(cd)}$. If ${a,b,c,d}$ are all distinct, then the transpositions can be applied in either order and thus each cell appears twice. If only three are distinct, then there are a few cases to consider:
\als{
\begin{array}{ccl}
(ac)(ab) \equiv (ab)(bc) & & \s(a)<\s(c)<\s(b) \\
(ac)(bc) \equiv (bc)(ab) & & \s(b)<\s(a)<\s(c) \\
(bc)(ac) \equiv (ab)(bc) \text{~and~} (ab)(ac) \equiv (bc)(ab) & & \s(a)<\s(b)<\s(c).
\end{array} \label{equivs}}
Thus there are two unique routes from ${C}$ to every codimension-2 cell in ${\pa^2C}$. Each pair of routes defines a quadrilateral in the poset, which we have seen implies a relative minus sign between the residues.
Hence the boundary operator manifestly squares to zero.

\subsection{Locality of Tree Contours}
The ${n}$-particle N${^k}$MHV tree amplitude can be computed as a linear combination of residues of \eqref{integral} with coefficients ${\pm 1}$.
Each residue appearing in the amplitude corresponds to a ${4k}$-dimensional cell, whose the remaining degrees of freedom are fixed by the ${4k}$ bosonic delta functions in \eqref{integral}.
A \emph{tree contour} is defined as any choice of contour on the top cell that produces a valid representation of the tree amplitude; there are many equivalent representations due to residue theorems. 
Tree-level BCFW recursion relations \cite{Britto:2004ap,Britto:2005fq,Brandhuber:2008pf} written in terms of on-shell diagrams \cite{ArkaniHamed:2012nw} provide one technique to find an appropriate set of cells, but the on-shell diagram formulation does not generate the relative signs between them. This will be resolved shortly.

The tree amplitude diverges for certain configurations of the external momenta; these are called local poles --- physically, they are interpreted as factorization channels in which an internal propagator goes on-shell. In the Grassmannian residue representation, such poles correspond to ${(4k-1)}$-dimensional cells, i.e.~boundaries of the ${4k}$-dimensional cells. The amplitude is not manifestly local in this formulation, meaning that some boundary cells translate to non-local poles, which are momentum configurations with non-physical divergences. A key feature of the tree contour is that all of the local poles appear precisely once in the residue representation of the amplitude, while any non-local poles appear twice \cite{ArkaniHamed:2012nw}. It was conjectured that the two appearances of each non-local pole should come with opposite signs so they cancel in the sum, similar to the vanishing of ${\pa^2}$. We are now equipped to prove that claim.

The boundary operator \eqref{pa} can be used to define signed residue theorems. Even though the sign of each term in \eqref{pa} is not fixed, the \emph{relative} sign between any two cells in the boundary, say ${C'_1}$ and ${C'_2}$, will always agree whenever they both appear in the boundary of a cell. It is easy to see that this is true if ${C'_1}$ and ${C'_2}$ share a common codimension-2 boundary since they form a quadrilateral. The edges connecting ${C'_1}$ and ${C'_2}$ to their shared boundary are the same no matter which ${C}$ is used in the initial boundary operation, so the relative sign between the edges connecting ${C}$ to ${C'_1}$ and ${C'_2}$ must not depend on that choice either. There is one situation in which they may not share a codimension-2 boundary, but in that case there will always be a third cell ${C_3'}$ which has common boundaries with both of them; cf. case (vi) in Section \ref{sec:BCFWgen}. Now following the intuition of \cite{ArkaniHamed:2012nw}, we find residue theorems by requiring that the boundary of every ${(4k+1)}$-dimensional cell vanishes:
\al{\pa C^{(4k+1)}=\sum_i w\Big(C^{(4k+1)},C^{(4k)}_i\Big)\, C^{(4k)}_i=0.\label{resthm}}

We define the tree contour to encircle each singularity of the measure exactly once, so any residue appearing in the amplitude will have a coefficient ${\pm 1}$. The residue theorems \eqref{resthm} can change which poles are included in the contour, but they will never cause residues to appear more than once. 
This implies that the relative sign between any two cells in the amplitude must match the relative sign of those cells in the residue theorems. Therefore, by the same logic that showed ${\pa^2=0}$, it follows that any ${(4k-1)}$-dimensional cell appearing twice in the boundary of the tree amplitude will show up with opposite signs. Hence all non-local poles cancel in the sum.

We have checked numerically that this choice of signs correctly cancels all non-local poles for BCFW representations of the tree amplitude with ${n=5,\ldots,13}$ and ${k=3,\ldots,\lfloor n/2 \rfloor}$.

\section*{Acknowledgments}
A great deal of gratitude is owed to Thomas Lam and Rachel Karpman for their patience and helpful discussions throughout the course of this project, and I am grateful to Jake Bourjaily for sharing his Mathematica code and for related discussions. I would like to thank Henriette Elvang, Yu-tin Huang, Cindy Keeler, Samuel Roland, and David Speyer for many helpful conversations along the way. Thanks also to Emily Olson for providing feedback and comments on the paper. This material is based upon work supported by the National Science Foundation Graduate Research Fellowship under Grant No.~F031543.

\appendix

\section{Details for Algorithm \ref{alg:alg1}}
\label{app:alg1}
In this appendix, we derive the reference sequences and compute the relative signs for step (2) of Algorithm \ref{alg:alg1}.
\begin{enumerate}[label=\bfseries\roman*)]
\item \textbf{${\bm{(ab)}}$ vs. ${\bm{(cd)}}$}\\
This case was discussed in the main text so we do not repeat the argument here.
\item \textbf{${\bm{(ab)}}$ vs. ${\bm{(ac)}}$}\\
These transpositions share a leg, so they do not commute as simply as in the previous case. Nonetheless, in this case we can use the fact that both ${(ab)}$ and ${(ac)}$ are allowed transpositions on the initial permutation to find a common parent cell. Specifically, since ${(ab)}$ was the last transposition before ${\s}$, we know ${a<b\leq\s(a)<\s(b)\leq a\!+\!n}$ and there is no ${q\in(a,b)}$ such that ${\s(q)\in (\s(a),\s(b))}$. Similarly from ${(ac)}$ we know ${a<c\leq\s'(a)<\s'(c)\leq a\!+\!n}$ and there is no ${q\in(a,c)}$ such that ${\s'(q)\in (\s'(a),\s'(c))}$. Moreover, since ${\s}$ and ${\s'}$ come from the same initial permutation, we also know that ${\s(a)=\s'(b)}$, ${\s(b)=\s'(c)}$, ${\s(c)=\s'(a)}$, and ${\s(q)=\s'(q)}$ for all other legs.

To show that ${(bc)}$ can be applied to ${\s}$, we need to show that ${\s(c) < \s(b)}$ and that there is no ${q\in (b,c)}$ such that ${\s(q)\in (\s(c),\s(b))}$. The condition that ${\s(c)<\s(b)}$ is satisfied since it is equivalent to ${\s'(a)<\s'(c)}$, and since no legs are touched between ${b}$ and ${c}$, the condition on ${q\in(a,c)}$ implies that no ${q\in(b,c)}$ has ${\s(q)\in (\s(c),\s(b))}$. Thus ${(bc)}$ is an allowed transposition on ${\s}$, so the path ${\tw}$ exists.

We also need to show that ${(ab)}$ can be applied to ${\s'}$, so we must show ${\s'(b) < \s'(a)}$ and that there is no ${q\in (a,b)}$ such that ${\s'(q)\in (\s'(b),\s'(a))}$. Since ${b\in(a,c)}$, we must have either ${\s'(b)<\s'(a)<\s'(c)}$ or ${\s'(a)<\s'(c)<\s'(b)}$. Since ${\s(a)<\s(b) \Rightarrow \s'(b)<\s'(c)}$, only the former condition is allowed, hence ${\s'(b)<\s'(a)}$. Furthermore, there is no ${q\in(a,b)}$ such that ${\s'(q)\in(\s'(b),\s'(c))}$, and ${(\s'(b),\s'(a))}$ is a subset of that range. Hence ${(ab)}$ is allowed on ${\s'}$, and ${\tw'}$ exists.

Note that the above analysis was valid for any charts, not just adjacent ones. However, to compute the relative sign, we will focus on the restricted set of charts for which all ${q\in(a,c)}$ satisfy ${\s'(q)\equiv q ~\mod n}$, namely the standard BCFW charts. The corresponding plabic graphs can be manipulated to a common layout using the equivalence moves \eqref{e1} and \eqref{e2}:
\al{\spl{\includegraphics[height=3.5cm]{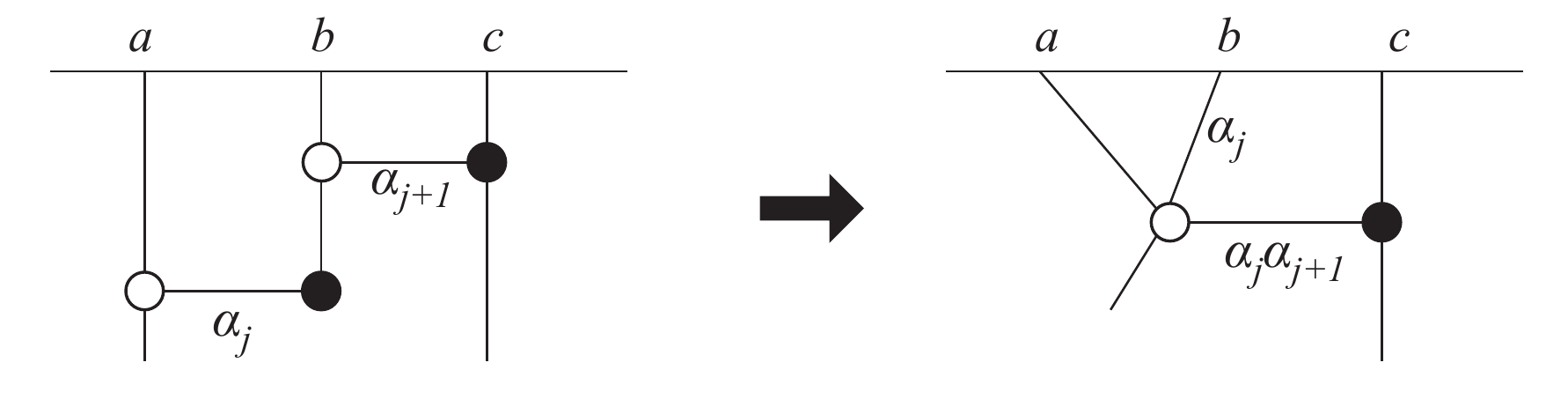}}\\
\spl{\includegraphics[height=3.5cm]{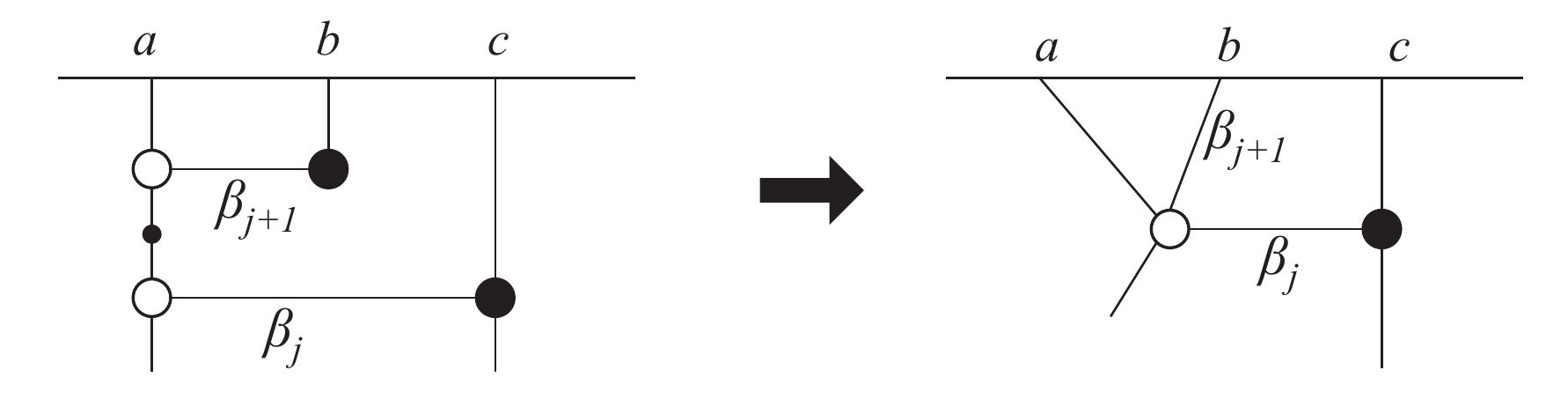}}}
We find that they are equivalent only with the identifications 
\als{\b_j=\a_j\a_{j+1}, \qquad \b_{j+1} = \a_j.}
Plugging this into the dlog forms and using that 
\als{
\dlog\a_j \wedge \dlog (\a_j\a_{j+1}) = \dlog\a_j \wedge \Big(\dlog\a_j + \dlog \a_{j+1}\Big) = -\dlog\a_{j+1} \wedge\dlog \a_{j},
}
we find that the two forms are oppositely oriented.
\item \textbf{${\bm{(ac)}}$ vs. ${\bm{(bc)}}$}\\
This situation is analogous to case (ii), so we can skip directly to comparing the graphs. Restricting again to the standard BCFW situation where ${\s(q)\equiv q ~\mod n}$ for all ${q\in(a,c)}$, we can perform a sequence of merge/delete and ${GL(1)}$ rotations on the corresponding plabic graphs:
\al{\spl{\includegraphics[height=3.5cm]{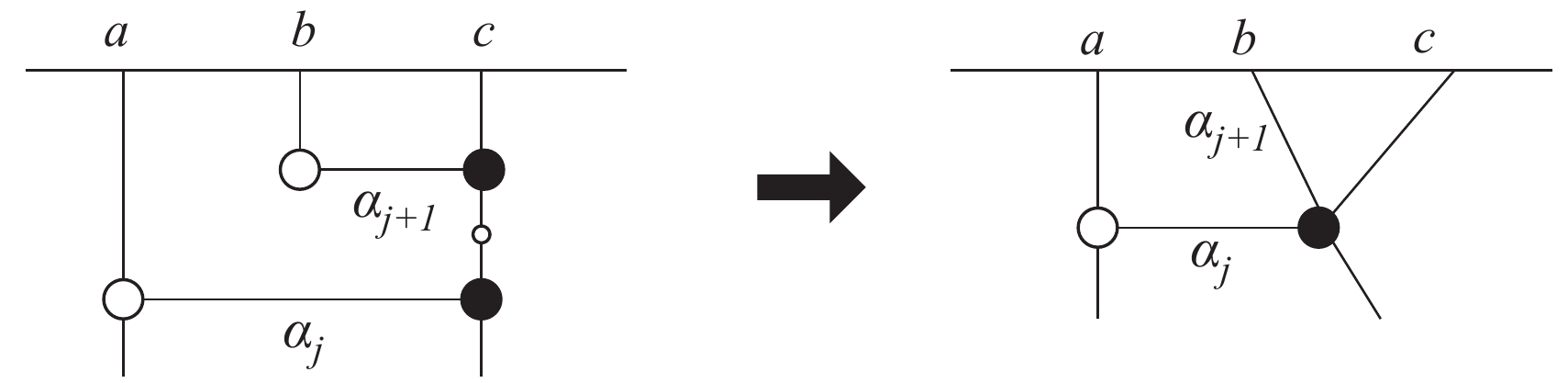}}\\
\spl{\includegraphics[height=3.5cm]{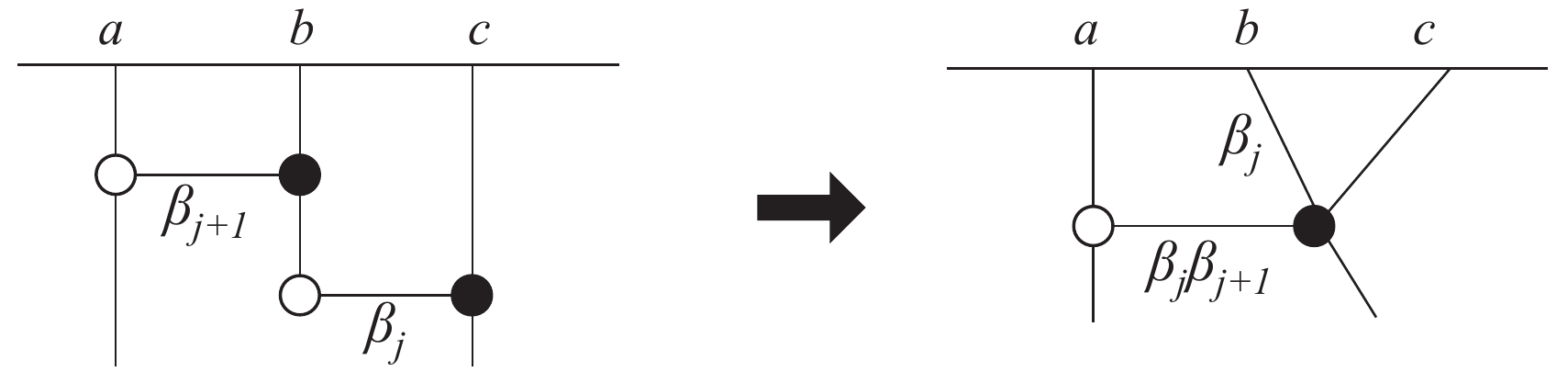}}}
We therefore identify 
\als{\b_j=\a_{j+1}, \qquad \b_{j+1}=\a_j/\a_{j+1}.}
Then using that ${\dlog(\a_j/\a_{j+1})=\dlog\a_j - \dlog\a_{j+1}}$, we find that the two forms are oppositely oriented.
\item \textbf{${\bm{(ab)}}$ vs ${\bm{(bc)}}$}\\
Using only adjacent transpositions, ${\s}$ and ${\s'}$ do not have a common parent cell. However, we can show that ${\rho=\s\cdot(bc)(ab)=\s'\cdot (ab)(bc)}$ is a shared grandparent. From the initial ${(ab)}$, we know ${a<b\leq \s(a) <\s(b) \leq a\!+\!n}$, while from  the initial ${(bc)}$, we have ${b<c\leq \s'(b) < \s'(c) \leq b\!+\!n}$.
From their shared origin cell, we also have ${\s(a)=\s'(c)}$, ${\s(b)=\s'(a)}$, and ${\s(c)=\s'(b)}$.
We will focus on adjacent charts wherein ${\s(q)=\s'(q) \equiv q ~\mod n}$ for all ${q\in (a,b)\cup (b,c)}$. The general case is covered in Section \ref{sec:BCFWgen}.

Since ${\s'(b)<\s'(c)}$ is equivalent to ${\s(c)<\s(a)}$, and ${\s(a)<\s(b)}$, we can apply ${(bc)}$ to reach ${\ts=\s\cdot (bc)}$. Now ${\ts(b) < \ts(a) < \ts(c)}$, so ${(ab)}$ is a valid transposition, which arrives at ${\rho}$. Hence the reference chart ${\tw}$ exists.

On the other side, ${\s(a)<\s(b)}$ is equivalent to ${\s'(c)<\s'(a)}$, and ${\s'(b)<\s'(c)}$, so ${(ab)}$ can be applied to ${\s'}$, yielding ${\ts'=\s'\cdot(ab)}$. Since ${\ts'(a)<\ts'(c)<\ts'(b)}$, we can apply ${(bc)}$, which also arrives at ${\rho}$. Thus ${\tw'}$ is also a valid reference chart.

Finally, we can compare the plabic graphs to find their relative orientation:
\al{\spl{\includegraphics[height=4cm]{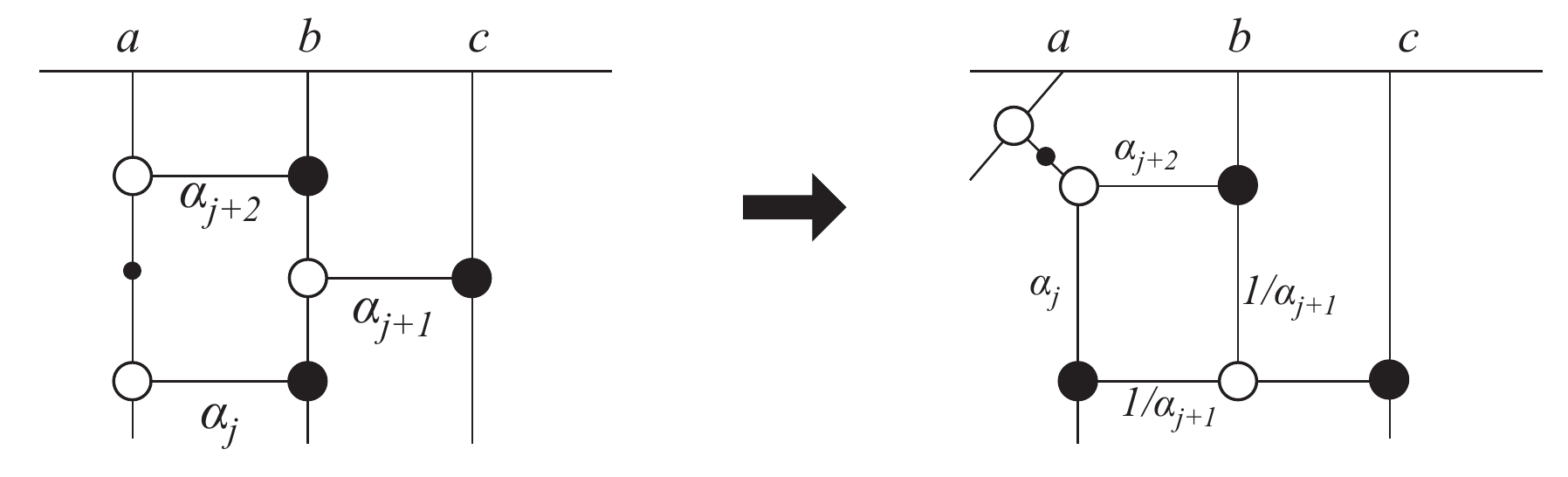}\label{braid1}}\vspace{-1cm}\\
\spl{\includegraphics[height=4cm]{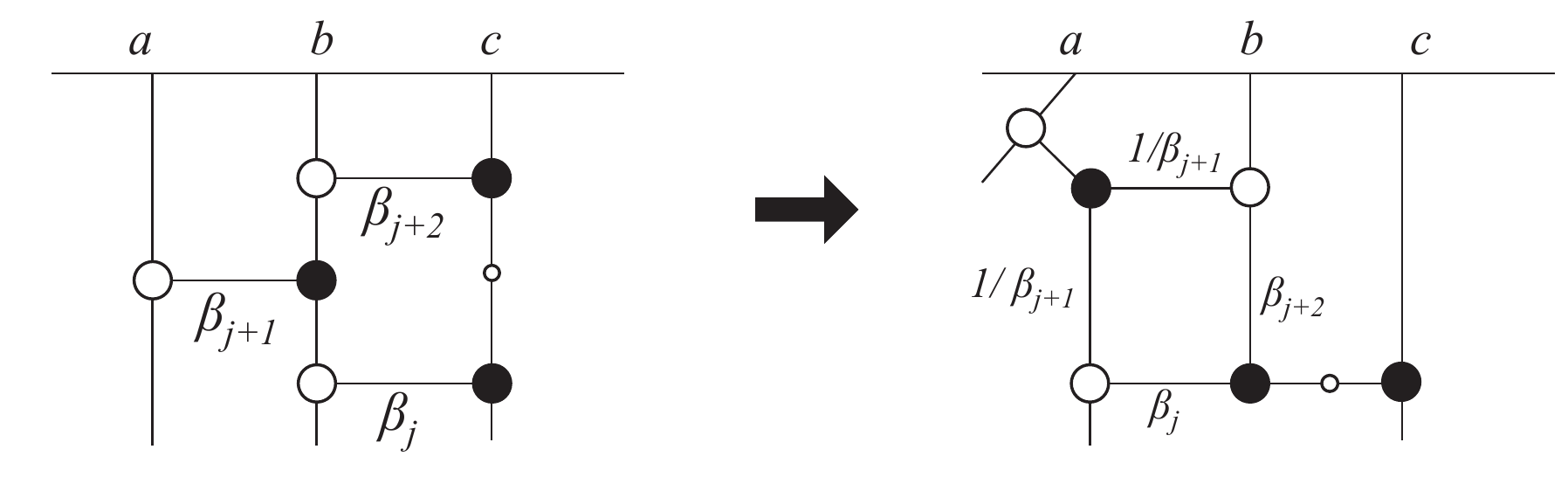}\label{bcabbc}}}
After performing a square move \eqref{e3} on \eqref{bcabbc}, we find
\als{\beta_j=\frac{\a_{j+1}\a_{j+2}}{\a_j+\a_{j+2}} , \quad \beta_{j+1}={\a_j+\a_{j+2}},\quad  \beta_{j+2}=\frac{\a_j\a_{j+1}}{\a_j+\a_{j+2}}.}

Chugging through a bit of algebra, the result is a positive relative orientation.
\item \textbf{${\bm{(ac)}}$ vs. ${\bm{(bd)}}$}\\
There is no way to find a common parent using only adjacent transpositions. However, there is a common grandparent ${\rho=\s\cdot (bc)(cd) =\s'\cdot(bc)(ab)}$. Since ${\s}$ and ${\s'}$ come from a shared origin, we have the following: ${a<c\leq \s(a) <\s(c) \leq a\!+\!n}$ and ${b<d\leq \s'(b) <\s'(d) \leq b\!+\!n}$; and ${\s(a)=\s'(c)}$, ${\s(b)=\s'(d)}$, ${\s(c)=\s'(a)}$, and ${\s(d)=\s'(b)}$. 
We focus on adjacent charts, so ${\s(q)=\s'(q)\equiv q ~\mod n}$ for all ${q\in (a,d)\backslash \{b,c\}}$. In addition, ${\s(b)=\s'(d) = b\!+\!n}$ and ${\s'(c)=\s(a)=c}$.

Since ${\s(b)=b\!+\!n>a\!+\!n\geq \s(c)}$, the transposition ${(bc)}$ is allowed, which leads to ${\ts=\s\cdot (bc)}$. Then ${\ts(a)<\ts(b)<\ts(c)=b\!+\!n}$ and ${\ts(d)=\s(d)=\s'(b)<b\!+\!n}$. Therefore ${(cd)}$ is also allowed, and we arrive at ${\rho}$. Therefore ${\tw}$ is a valid reference chart.

For ${\s'}$, we use that ${\s'(c)=c < d \leq \s'(b)}$, so ${(bc)}$ is allowed. Thus with ${\ts'=\s'\cdot (bc)}$, we have that ${c=\ts'(b)<\ts'(c)<ts'(d)}$ and ${\ts'(a)=\s'(a)=\s(c)>c}$. Hence we can apply ${(ab)}$, which yields ${\rho}$, so ${\tw'}$ is a good reference chart.

To compute the relative sign, we study the corresponding plabic graphs:
\al{\spl{\includegraphics[height=4cm]{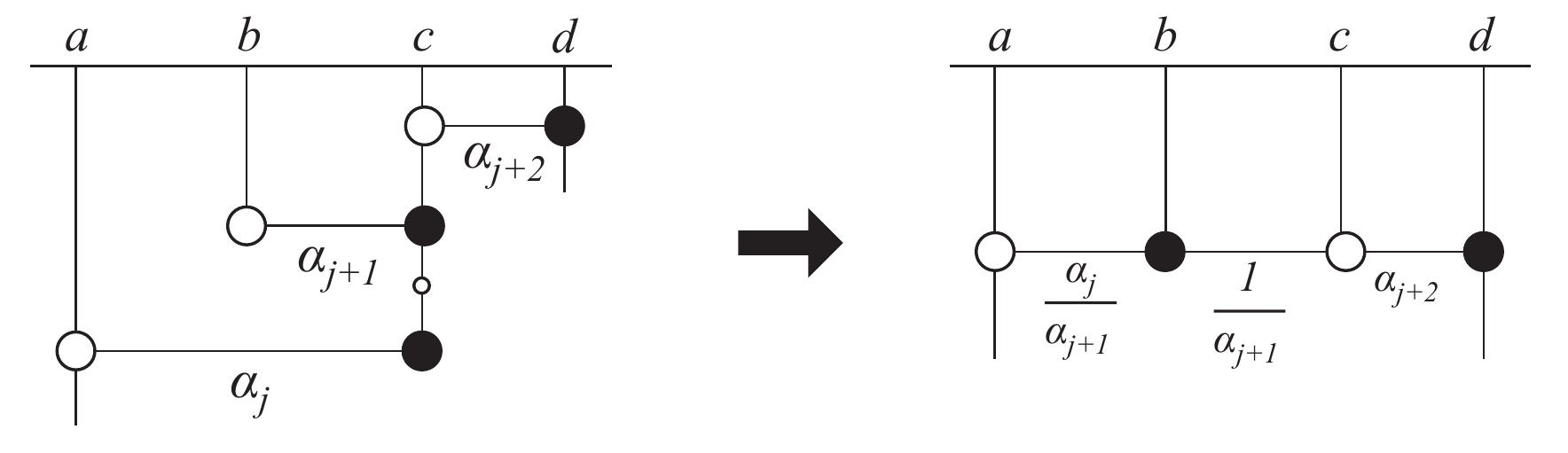}}\\
\vspace{-1cm}
\spl{\includegraphics[height=4cm]{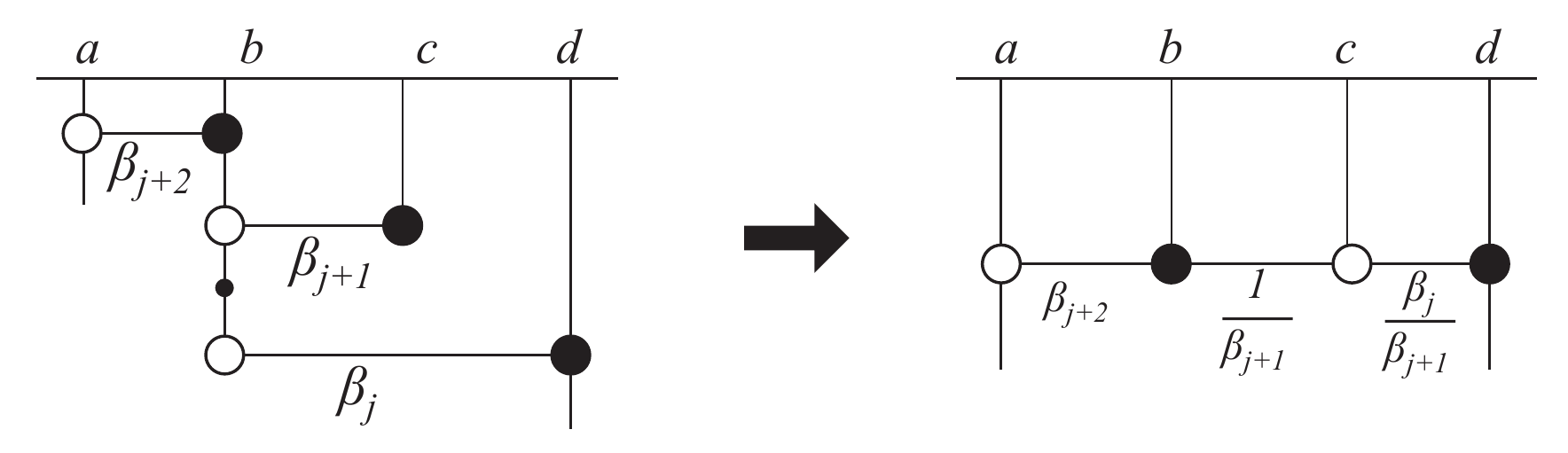}}}
They are equivalent under the identifications
\als{\beta_j=\a_{j+1}\a_{j+2}, \quad \beta_{j+1}=\a_{j+1}, \quad \beta_{j+2}={\a_j}/{\a_{j+1}}.}
Plugging this into ${\omega'}$, one finds that the two forms are oppositely oriented.
\item \textbf{${\bm{(bc)}}$ vs. ${\bm{(ad)}}$}\\
From the point of view of a graph embedded in a disk, these bridges can be added in either order:
\als{\includegraphics[height=4cm]{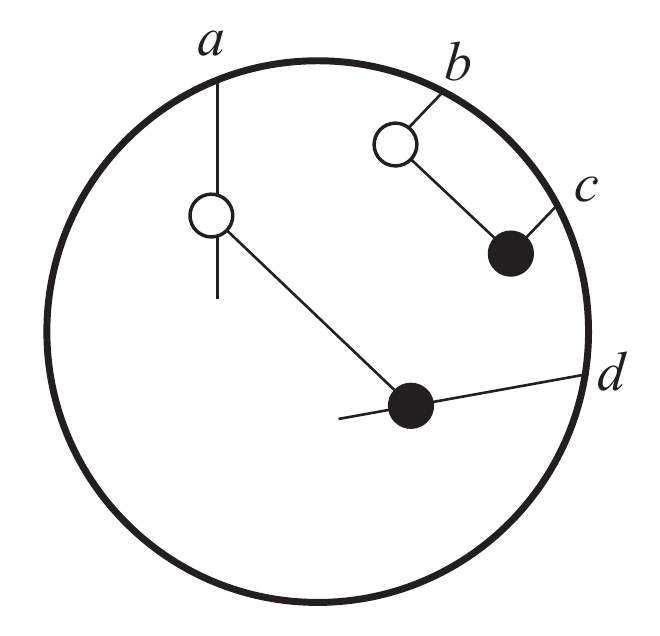}}
However, the adjacency requirement forbids applying ${(ad)}$ after ${(bc)}$. Thus we must look further to find a meeting point, ${\tr}$, for their reference charts. We will focus on adjacent charts, leaving the general case for Section \ref{sec:BCFWgen}.  

Since ${b,c\in(a,d)}$, we know ${\s(b),\s(c)\equiv c,b~ \mod n}$, so it follows that ${\s(b)=c}$ and ${\s(c)=b\!+\!n}$. Then since ${\s(a)=\s'(d)\geq d > c =\s(b)}$, we can next apply ${(ab)}$.
Similarly, ${\s(d)=\s'(a)\leq a+n < b+n = \s(c)}$, we could also apply ${(cd)}$; this is unaffected by applying ${(ab)}$, so we apply it next to arrive at ${\r=\s\cdot(ab)(cd)}$. Finally, $(bc)$ is allowed because ${\r(b)=\s(a)=\s'(d)>\s'(a)=\s(d)=\r(c)}$ and ${c<d\leq\s'(a)<\s'(d)\leq a\!+\!n < b\!+\!n}$. Hence ${\tw}$ is a valid reference chart.

From ${\s'}$, we are certainly allowed to apply ${(bc)}$ after ${(ad)}$, thus arriving at ${\ts=\s\cdot(bc)}$. We can add ${(ab)}$ and then ${(cd)}$ by essentially the same argument as above. Therefore both ${\tw}$ and ${\tw'}$ are valid reference charts.

We compare the plabic graphs to find the relative sign. One can either use the equivalence moves \eqref{e1}-\eqref{e3} or repeatedly apply the transformation rules from cases (ii)-(iv) to find
\al{\spl{\includegraphics[height=4cm]{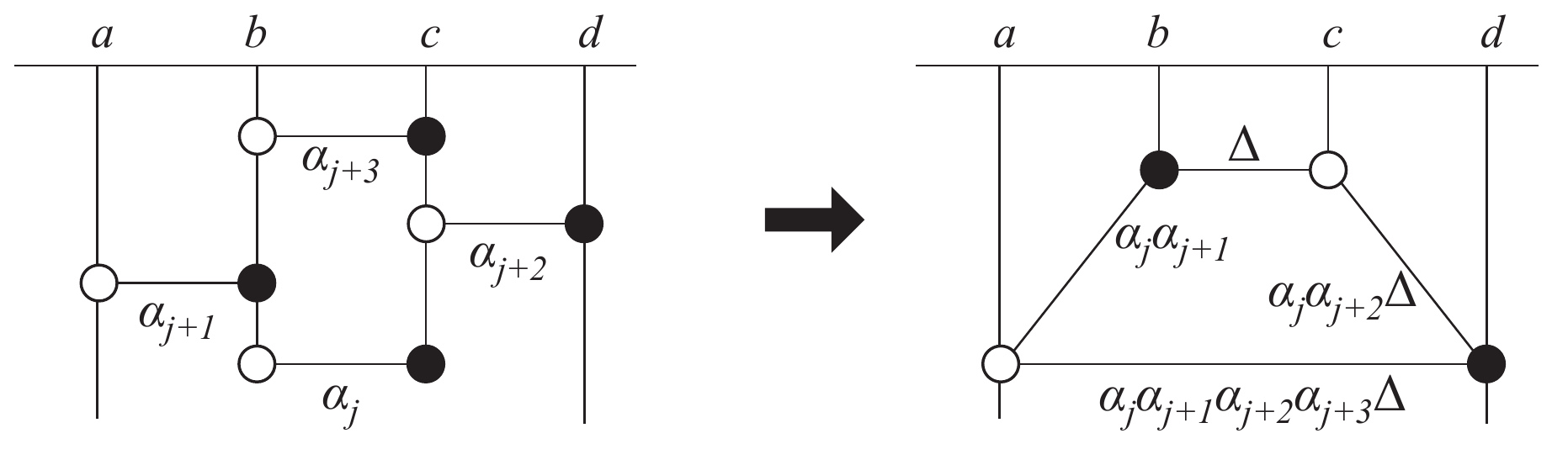}}\\
\spl{\includegraphics[height=4cm]{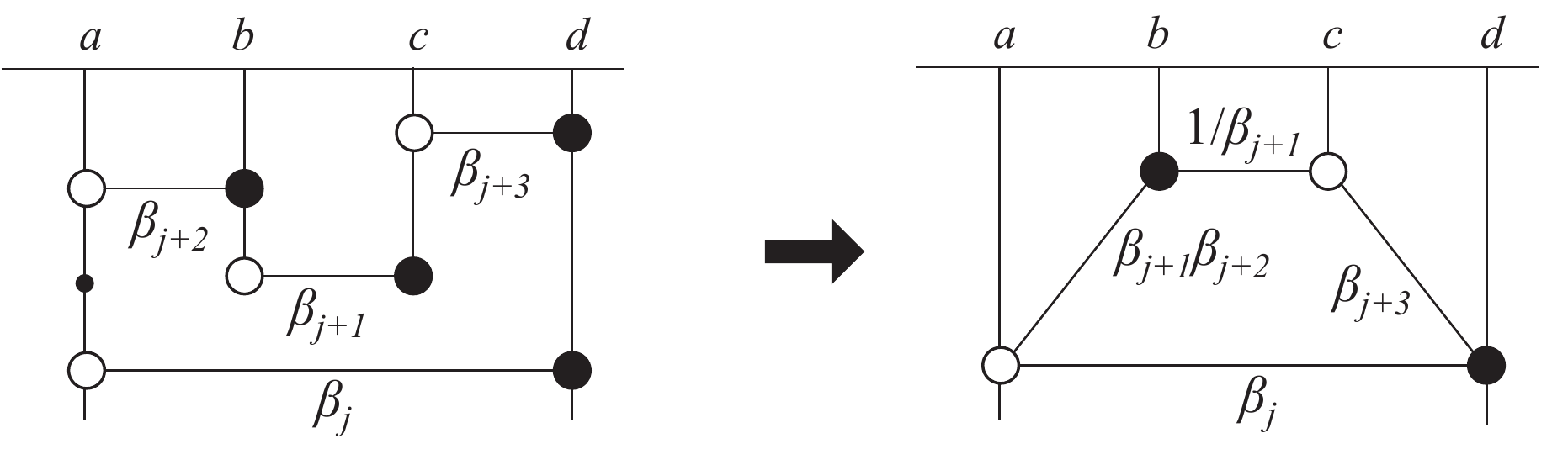}}}
where we have defined
$\Delta=1\big/(\a_j+\a_{j+3})$.
The graphs are equivalent after the following identifications:
\als{\b_j =\frac{\a_j \a_{j+1} \a_{j+2} \a_{j+3}}{\a_j+\a_{j+3}} , ~~ \b_{j+1}=\a_j+\a_{j+3}, ~~ \b_{j+2} =\frac{\a_j \a_{j+1}}{\a_j+\a_{j+3}}, ~~ \b_{j+3}=\frac{\a_j \a_{j+2}}{\a_j+\a_{j+3}}.}
Rearranging the corresponding forms demonstrates that the relative orientation of the reference charts is ${-1}$.
\end{enumerate}

\section{Assigning Edge Weights}
\label{app:edgesigns}
In this appendix we provide one technique for assigning weights ${\pm1}$ to every edge in the poset such that the product of signs around every quadrilateral is ${-1}$. The method was developed by Thomas Lam and David Speyer \cite{TLDSprivate}.

\subsection{Reduced Words}
Each decorated permutation can equivalently be represented by one or more \emph{reduced words}: minimal-length sequences of \emph{letters} which obey certain equivalence relations. In this case, the letters ${s_i}$ are generators of the affine permutation group ${\tilde{S}_n}$. They are defined with the following properties and relations:
\beq
\spl{
&s_i := \left\{\begin{array}{c}\sigma(i)\to\sigma(i+1)\\ \sigma(i+1)\to\sigma(i) \end{array}\right. ~ 1\leq i\leq n,  \qquad s_{i\!+\!n}\equiv s_i,  \\
&s_i^2 \equiv 1, \qquad s_is_{i+1}s_i\equiv s_{i+1}s_is_{i+1}, \qquad s_i s_j \equiv s_j s_i\, ,~ |i-j|>1.
}
\label{reducedWords}
\eeq
We will refer to the relations in the second line as, respectively, the \textit{reduction move}, the \textit{braid move}, and the \textit{swap move}.
As we showed in Lemma \ref{lemma}, any ${d}$-dimensional cell ${C}$ in ${\Gr(k,n)}$ can be reached from the top cell by a sequence of ${k(n-k)-d}$ transpositions ${s_i}$, and that sequence defines a reduced word labeling ${C}$. There are generally many distinct reduced words for a given cell, but they are equivalent due to the relations in \eqref{reducedWords}. 

Any two cells ${C}$ and ${C'}$ of dimensions ${d}$ and ${d+1}$, which share an edge in the poset, i.e.~are related by a single boundary operation, are straightforwardly connected in the language of reduced words. Given a reduced word ${f}$ of length ${\delta=k(n-k)-d}$ on the lower-dimensional cell ${C}$, there is a unique reduced word ${f'}$ on ${C'}$ of length ${\delta'=\delta-1}$ obtained by deleting a single letter from ${f}$ \cite{bjorner2005combinatorics}. Specifically, for ${f=s_{i_1}s_{i_2}\ldots s_{i_j} \ldots s_{i_\delta}}$, there is a unique ${s_{i_j}}$ such that ${f'=s_{i_1}s_{i_2}\ldots\hat{s}_{i_j}\ldots s_{i_\delta}}$, where the hat denotes deletion. It is easy to construct a (generally non-reduced) word ${t}$ such that ${f'\equiv tf}$; one can check that ${t=s_{i_1}s_{i_2}\ldots s_{i_{j-1}} s_{i_j}s_{i_{j-1}}\ldots s_{i_2}s_{i_1}}$ accomplishes the desired effect by repeated reduction moves. Given that the two cells are also related by a transposition of the form ${(ab)}$, it is not surprising that repeated application of the relations \eqref{reducedWords} shows that ${t\equiv s_a s_{a+1}\ldots s_{b-2} s_{b-1} s_{b-2} \ldots s_{a+1} s_a}$, which is a reduced word representation of ${(ab)}$.

\subsection{Decorating Edges}
Choose a representative reduced word for every cell in the poset; we will refer to this as the \emph{standard word} on that cell. This is similar to choosing a particular Lemma \ref{lemma} sequence for each cell. From a cell ${C}$ labeled by the standard word ${f}$, every cell ${C'}$ that has ${C}$ as a boundary can be reached by deleting some ${s_{i_j}}$ from ${f}$. This yields a word ${f'}$ on ${C'}$ as explained above. If ${f'}$ is identical to the standard word on ${C'}$, then weight the corresponding edge with ${(-1)^{j}}$. It is easy to check that deleting two letters ${s_{i_{j_1}}}$ and ${s_{i_{j_2}}}$ in opposite orders will produce the desired factor of ${-1}$ around a quadrilateral. For one order, we would find 
${(-1)^{i_{j_1}+i_{j_2}}}$, while for the other order we would find ${(-1)^{i_{j_1}+i_{j_2}-1}}$, so they differ by ${-1}$.

However, it is not always possible to delete both transpositions in either order. One may have to delete different transpositions to reach the same cell by two different routes. Moreover, the final reduced word in each case may be different. Thus, we need to find the sign difference between two reduced words. Any two reduced words can be mutated into each other using just the braid and swap moves defined in \eqref{reducedWords}. By decorating these rules with ${\pm1}$, we can find the desired difference between the words. In fact, we already determined those signs in Appendix \ref{app:alg1} because the generators act as adjacent transpositions. The braid move is simply a special case of (iv), so it should be decorated with a ${+1}$, and the swap move is a special case of (i), so it should be decorated with a ${-1}$. Hence, the relative sign between two words is the number of swaps required to transform one into the other.

The number of swaps can be determined directly from the \emph{inversion list} of each word. A reduced word creates a unique ordering on the set of inversions in the permutation labeling the cell (not all orderings are possible). The number of swap moves needed to transform one word into another is the number of pairs of inversions ${(i,j)}$ and ${(k,l)}$, with all ${i,j,k,l}$ distinct, in different order in the two inversion lists.

\end{document}